\newtheorem{thm} {Theorem}
\newtheorem{lem}[thm] {Lemma}
\def\;{{\hspace{0.3ex};\hspace{0.5ex}}}
\def\,{{\hspace{0,3ex},\hspace{0.5ex}}}
\def\({{\hspace{1.2ex}(}}
\def\QED{\mbox{\rule[0pt]{1.5ex}{1.5ex}}}
\def\endproof{\hspace*{\fill}~\QED\par\endtrivlist\unskip}
 \newenvironment{proofof}[1]{\vspace*{5mm} \par \noindent
{\bf\em Proof of #1:\hspace{2mm}}}{\endproof
}
\def\Label#1{\label{#1} \hbox{[ #1 ]} }
\def\Label{\label}
\begin{document}


\title[Optimal decoy intensity]{Optimal decoy intensity for decoy quantum key distribution}

\author{Masahito Hayashi}

\address{Graduate School of Mathematics, Nagoya University, Furocho, Chikusa-ku, Nagoya, 464-860 Japan}
\address{Centre for Quantum Technologies, National University of Singapore, 3 Science Drive 2, Singapore 117542}

\begin{abstract}
In the decoy quantum key distribution,
we show that a smaller decoy intensity gives a better key generation rate in the asymptotic setting
when we employ only one decoy intensity and the vacuum pulse.
In particular, the counting rate of single photon can be perfectly estimated
when the decoy intensity is infinitesimal. 
The same property holds even when the intensities cannot be perfectly identified. 
Further, we propose a protocol to improve 
the key generation rate over the existing protocol under the same decoy intensity.
\end{abstract}

\pacs{03.67.Dd,03.67.Hk,03.67.-a,05.30.Jp}



\maketitle


\section{Introduction}
Quantum key distribution (QKD) by BB84 protocol \cite{BB84} is one of the most important applications of quantum information.
The original QKD requires the single photon source.
However, many economically realizable photon sources produce only weak coherent pulses.
So, they cannot be used for the original QKD protocol.
To solve this problem, we need to
estimate the detection rate of the single photon pulse. 
Hwang \cite{decoy1} proposed the decoy method, in which,
we estimate the detection rate of the single photon pulse from the detection rates of the weak coherent pulses with different intensities.
As another solution, continuous variable quantum key distribution works with weak coherent pulses. (see \cite{2} and references therein)
While continuous variable quantum key distribution can be implemented with an inexpensive Homodyne detection,
the decoy method with BB84 protocol can achieve the longest distance 
with the current technology\cite{Sasa,Stucki}.
So, it is natural to focus on the decoy method.

In the decoy method, we employ two kinds of weak coherent pulses; One is the signal pulse, which generates raw keys.
The other is the decoy pulse, which is used only for the estimation of the detection rate of the single photon pulse.
The key point is the difference between the signal and decoy intensities $\mu_s$ and $\mu_d$, which are the intensities of the signal and decoy pulses.
Using the detection rates of these pulses, the decoy method determines a lower bound of the detection rate of single photon pulse.
However, it cannot uniquely determine this detection rate
although it has been improved by many researchers \cite{decoy2,decoy3,Ma05,Wang05,H1,decoy4,Lo2,wang2,wang3}.
To improve 
this estimation, 
the papers \cite{H1,decoy4} proposed to increase the number of the decoy intensities,
and showed that this detection rate can be uniquely determined when the number of the decoy intensities is infinitely large.
However, it strains the network system of QKD to increase the number of decoy intensities. 
So, it is better to realize a precise estimation without increase of this number. 

In this paper, we focus on the case when we employ only one decoy intensity $\mu_d$ and the vacuum pulse
for the estimation of the detection rate of single photon pulse.
Firstly, we consider a formula for secure key generation rate that 
is different from the conventional one.
Indeed, while the paper \cite{HN} discussed the key generation rate of finite-length setting,
our formula can be regard as the asymptotic version of the key generation rate given in \cite{HN}.
We show that our our formula is better than the conventional one.
Secondly, 
we optimize the choice of the decoy intensity.
This kind of optimization 
for the conventional formula for the asymptotic key generation rate
has been done by Ma et al \cite{Ma05}
when the source intensities are perfectly controlled.
We derive the same optimization for our improved formula for 
the asymptotic key generation rate.
Further, similar to Wang \cite{wang2,wang3}, 
we extend these result to the case when the intensities are different from our intent. 
Even in this generalization, we still have the same conclusion.
On the other hand, Ma et al \cite{Ma05} also considered a similar optimization 
for the conventional formula for the asymptotic key generation rate
when the source intensities have statistical fluctuation.
However, since their setting is different from our setting as explained 
in the second paragraph of Section \ref{s3},
our analysis is different for their analysis.

The remaining part of this paper is organized as follows.
Section \ref{s2} discusses the decoy method when the source intensities are controlled.
Then, we explain our improved formula for the asymptotic key generation rate.
We derive the optimal decoy intensity of this case.
Section \ref{s3} extends the above result to the case when the source intensities cannot be perfectly identified.
Section \ref{s4} discusses the relation of the obtained result with the finite-length case \cite{HN}.
Several proofs are given in Appendixes.

\section{Controlled source intensities}\Label{s2}
First, we discuss the case when the source intensities are controlled.
To discuss this case,
we recall the improved GLLP formula \cite{GLLP,Lo}.
When we distill the secure key from given $M$-bits raw key in the bit basis from the signal pulse,
we firstly apply error correction and then obtain $(1-\eta h(e_{s,+})) M $-bits corrected key,
where $e_{s,+}$ is the error rate in the bit basis of the signal pulse.
Here, $h$ is the binary entropy with the logarithm to the base 2
and the parameter $\eta$ is the efficiency of error correction, which is chosen to be $1.1$ in a realistic case and to be $1$ in the ideal case.
The next step, the privacy amplification, requires
the ratio $p:q:r$ ($p+q+r=1$) of the vacuum pulse, the single photon pulse, and the multi-photon pulse
among the received pulses.
When the error rate of the phase basis in the single photon pulse is $e_{\times}$,
it is enough to apply universal2 hash functions sacrificing $(q h(e_{\times})+ r)M$ bits \cite{Renner, H2, finite}.
Hence, we can obtain $(1-\eta h(e_{s,+})-q h(e_{\times})- r ) M  $ bits of secure key.
That is, the secure key generation rate 
per received pulse with the matched basis is 
$1-\eta h(e_{s,+})-q h(e_{\times})- r$.
The rates $q$ and $r$ can be calculated 
from the detection rates $a$, $p_0$, and $p_{s,+} $ of the single photon pulse, the vacuum pulse 
and the signal pulse 
as follows.
Since the signal pulse has the intensity $\mu_s$,
the transmitted signal pulses consist of 
the vacuum pulse, the single photon pulse, and the multi-photon pulse
with the ratio 
$e^{-\mu_s}:\mu_s e^{-\mu_s}:1- (1+\mu_s)e^{-\mu_s}$.
Then, the ratio of the vacuum pulse, the single photon pulse, and the multi-photon pulse among detected signal pulses 
is $p_0 e^{-\mu_s}:a \mu_s e^{-\mu_s}:p_{s,+} - (p_0+a \mu_s)e^{-\mu_s}$.
That is, $q=a \mu_s e^{-\mu_s}/p_{s,+}$ and $r=1 - (p_0+a \mu_s)e^{-\mu_s}/p_{s,+}$.
Therefore, the secure key generation rate 
per received pulse with the matched basis is 
\begin{align}
\frac{(p_0+a \mu_s(1-h(e_{\times})) )e^{-\mu_s}}{p_{s,+}}-\eta h(e_{s,+}).
\Label{a2}
\end{align}
However, the rate $a$ and the phase error rate $e_{\times}$ cannot be directly measured
although $p_0$ can be directly measured by transmitting the vacuum pulse.

The decoy method enables us to estimate the quantities $a$ and $e_{\times}$ by using the above measurable values
and the detection rates $p_{s,\times}$ and $p_{d,\times}$ of the signal and decoy pulses with the phase basis,
and the error rates $e_{s,\times}$ and $e_{d,\times}$ of the signal and decoy pulses with the phase basis.
These rates can be measured by randomizing the basis and the intensity.

In the existing method \cite{decoy3,Ma05,Wang05,H1,decoy4},
they derive the estimate $\hat{a}$ of the detection rate $a$ of single-photon pulse
and the estimate $\hat{b}$ of the rate $b$ that
the single-photon pulse is detected with the phase error 
as follows.
\begin{align}
&\hat{a}(p_{d,\times},p_{s,\times})\nonumber \\
:=&
\left[\frac{
\mu_s^2 e^{\mu_d}(p_{d,\times} - p_0 e^{-\mu_d}/2)
-
\mu_d^2 e^{\mu_s}(p_{s,\times} - p_0 e^{-\mu_d}/2)
}{\mu_d \mu_s (\mu_s-\mu_d) } \right]_+
\Label{a1}
\\
&\hat{b}_{\times}(e_{d,\times}, p_{d,\times}) 
:= \left[\frac{e_{d,\times}p_{d,\times} e^{\mu_d} -p_0/2 }{\mu_d}\right]_+,
\Label{a3}
\end{align}
where we assume that $\mu_d <\mu_s$.
The above estimates of the case $\mu_d >\mu_s$
can be derived by exchanging the roles of the decoy and signal pulses
in the right hand side.
The key point of the derivation of (\ref{a1}) and (\ref{a3}) 
is the following expansions of the states of the decoy and signal pulses:
$\sum_{n=0}^\infty e^{-\mu_d}\frac{\mu_d^n}{n!}
|n \rangle\langle n|=
e^{-\mu_d}|0\rangle\langle 0|
+e^{-\mu_d}\mu_d|1\rangle\langle 1| 
+(1-(1+\mu_d)e^{-\mu_d} ) \rho_2$, and
$\sum_{n=0}^\infty e^{-\mu_s}\frac{\mu_s^n}{n!}
|n \rangle\langle n|=
e^{-\mu_s}|0\rangle\langle 0|
+e^{-\mu_s}\mu_s|1\rangle\langle 1| 
+\beta_2 \rho_2 +\beta_3 \rho_3$,
where the state $\rho_3$ is chosen properly.
The estimate $\hat{a}$ in (\ref{a1}) is derived from the non-negativity of the detection rate $a_3$ of the state $\rho_3$,
and the estimate $\hat{b}$ in (\ref{a3}) is from the non-negativity of the rate 
that the pulse with the state $\rho_2$ is detected with the phase error.
Substituting $\hat{a}$ and $\frac{\hat{b}}{\hat{a}}$
into $a$ and $e_{\times}$ of the formula (\ref{a2}),
we obtain the key generation rate.

In this paper, instead of $\hat{a}$,
we estimate the rate $c$ that the single-photon pulse is detected without the phase error.
Since a smaller $c$ gives a better case,
we can estimate $c$ in the same way as the detection rate $a$.
Then, we obtain the estimate $\hat{c}$ as
\begin{align}
&\hat{c}(p_{d,\times},p_{s,\times}, e_{d,\times},e_{s,\times})
\nonumber \\
:=&
\hat{a}((1-e_{d,\times})p_{d,\times},(1-e_{s,\times})p_{s,\times}),
\Label{a4}
\end{align}
which is derived from the non-negativity of the rate 
$c_3$ that the pulse with the state $\rho_3$ 
is detected without the phase error.
Since the rate $c_3$ is smaller than the rate $a_3$,
the non-negativity of $c_3$ is a stronger 
constraint than that of $a_3$.
So, the estimate $\hat{c}$ in (\ref{a4}) is better than the estimate given in (\ref{a1}).
Therefore, substituting $\hat{c}+\hat{b}$ and $\frac{\hat{b}}{\hat{b}+\hat{c}}$
into $a$ and $e_{\times}$ of the formula (\ref{a2}),
we obtain a better key generation rate.

\begin{figure}[h]
\centering
\includegraphics[width=8.00cm, clip]{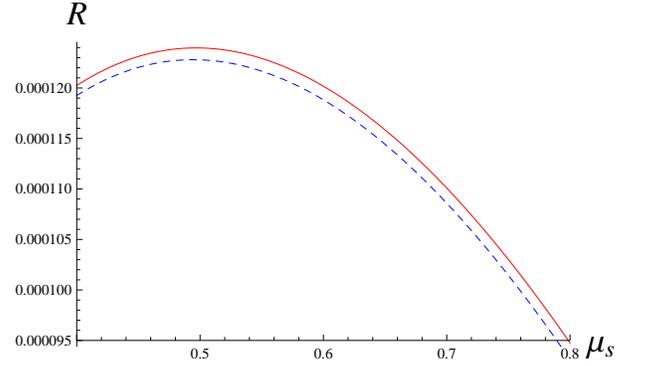}
\caption{Our key generation rate $R(\mu_s,\mu_d)$ (red, normal) and  
existing key generation rate $\tilde{R}(\mu_s,\mu_d)$ \cite{decoy3,Ma05,Wang05,H1,decoy4} (blue, dashed)
with the decoy intensity ${\mu}_d=0.1$.
The horizontal axis expresses the signal intensity ${\mu}_d$.}
\Label{FR4}
\end{figure}

Now, we consider the case with no eavesdropper, i.e.,
the case when the true intensities coincide with our intent intensities.
In the following, 
the subscript $s$ expresses the signal pulse,
and 
the subscript $d$ expresses the decoy pulse.
Then, we adopt the following model 
for the detection rates $p_{i,+},p_{i,\times}$ and the error rates 
$e_{i,+},e_{i,\times}$ with the parameters $\alpha$ and $s$ \cite{LB,GRTZ}:
\begin{align}
&p_{i,+}=p_{i,\times}= 1-e^{-\alpha \mu_i} + p_0 ,\nonumber \\
&e_{i,+}p_{i,+} =e_{i,\times}p_{i,\times}
= s(1-e^{-\alpha \mu_i}) + \frac{p_0}{2} ,\quad
i=s,d,
\Label{10-8-14}
\end{align}
where $\alpha$ is the total transmission including quantum efficiency of the detector,
and $s$ is the error due to the imperfection of the optical system.

Under this assumption, we estimate 
the detection rate $a$ of the single photon pulse, 
the rate $b$ that the single-photon pulse is detected with the phase error, 
and the rate $c$ that the single-photon pulse is detected without the phase error.
By letting $\mu_2$ be the larger intensity of $\mu_s$ and $\mu_d$ is 
and $\mu_1$ be the smaller one,
their estimates $\hat{a}$, $\hat{b}$, and $\hat{c}$ are given as
\begin{widetext}
\begin{align}
\hat{a}(\mu_1,\mu_2) 
& :=
\frac
{
\mu_2^2 e^{\mu_1}
((1-e^{-\alpha \mu_1})+ p_0 (1- e^{-\mu_1}))
-
\mu_1^2 e^{\mu_2}
((1-e^{-\alpha \mu_2})+ p_0 (1- e^{-\mu_2}))
}
{\mu_1 \mu_2 (\mu_2-\mu_1) } \nonumber \\ 
&=
(\mu_2 \mu_1)
\frac
{
\frac{(1+p_0) e^{\mu_1} -e^{ (1-\alpha) \mu_1}-p_0}{\mu_1^2}
-
\frac{(1+p_0) e^{\mu_2} -e^{ (1-\alpha) \mu_2}-p_0}{\mu_2^2}
}
{\mu_2-\mu_1 } 
\Label{10-8-7} \\
\hat{b}(\mu_1) 
&:=
\frac{s(1-e^{-\alpha \mu_1})e^{\mu_1} +\frac{p_0}{2}(e^{\mu_1}-1)}{\mu_1}
\Label{10-8-9-c}\\
\hat{c}(\mu_1,\mu_2)
& :=
\frac
{
\mu_2^2 e^{\mu_1}
((1-s)(1-e^{-\alpha \mu_1})+ \frac{p_0}{2} (1- e^{-\mu_1}) )
-
\mu_1^2 e^{\mu_2}
((1-s)(1-e^{-\alpha \mu_2})+ \frac{p_0}{2} (1- e^{-\mu_2}) )
}
{\mu_1 \mu_2 (\mu_2-\mu_1) }
\nonumber
\\
& =
(\mu_2 \mu_1)
\frac
{
\frac{(1-s+p_0/2) e^{\mu_1} -(1-s)e^{ (1-\alpha) \mu_1}-\frac{p_0}{2}}{\mu_1^2}
-
\frac{(1-s+p_0/2) e^{\mu_2} -(1-s)e^{ (1-\alpha) \mu_2}-\frac{p_0}{2}}{\mu_2^2}
}
{\mu_2-\mu_1 } 
\Label{10-8-9}.
\end{align}
By using these estimates,
the key generation rates $R(\mu_s,\mu_d)$
and $\tilde{R}(\mu_s,\mu_d)$ are written as
\begin{align}
 R(\mu_s,\mu_d)= &
\left\{
\begin{array}{ll}
\mu_s e^{-\mu_s} 
(\hat{c}(\mu_d,\mu_s)+\hat{b}(\mu_d))
 (1-h(
\frac{\hat{b}(\mu_d)}{\hat{c}(\mu_d,\mu_s)+\hat{b}(\mu_d)}
)
) 
+ e^{-\mu_s} p_0 
- p_{s,+} \eta h (\frac{s_{s,+}}{p_{s,+}})
& \hbox{if } \mu_s < \mu_d
\\
\mu_s e^{-\mu_s} 
(\hat{c}(\mu_s,\mu_d)+\hat{b}(\mu_s))
 (1-h(
\frac{\hat{b}(\mu_s)}{\hat{c}(\mu_s,\mu_d)+\hat{b}(\mu_s)}
)
) 
+ e^{-\mu_s} p_0 
- p_{s,+} \eta h (\frac{s_{s,+}}{p_{s,+}})
& \hbox{if } \mu_s > \mu_d.
\end{array}
\right.\Label{10-23-3} \\
\tilde{R}(\mu_s,\mu_d)= &
\left\{
\begin{array}{ll}
\mu_s e^{-\mu_s} 
\hat{a}(\mu_d,\mu_s)
 (1-h(
\frac{\hat{b}(\mu_d)}{\hat{a}(\mu_d,\mu_s)}
)
) 
+ e^{-\mu_s} p_0 
- p_{s,+} \eta h (\frac{s_{s,+}}{p_{s,+}})
& \hbox{if } \mu_s < \mu_d
\\
\mu_s e^{-\mu_s} 
\hat{a}(\mu_s,\mu_d)
 (1-h(
\frac{\hat{b}(\mu_s)}{\hat{a}(\mu_s,\mu_d)}
)
) 
+ e^{-\mu_s} p_0 
- p_{s,+} \eta h (\frac{s_{s,+}}{p_{s,+}})
& \hbox{if } \mu_s > \mu_d.
\end{array}
\right.
\Label{10-23-4}
\end{align}
\end{widetext}

Then, we obtain the following lemma, which will be shown in Appendix \ref{as1}.
\begin{lem}\Label{thm2}
$(\hat{c}(\mu_1,\mu_2) 
+\hat{b}(\mu_1) )
(1-h(
\frac{\hat{b}(\mu_1)}
{\hat{c}(\mu_1,\mu_2 ) 
+\hat{b}(\mu_1) }
))$
is monotonically decreasing for $\mu_1$ and $\mu_2$
when
$\frac{\hat{b}(\mu_1)}
{\hat{c}(\mu_1,\mu_2 ) 
+\hat{b}(\mu_1) }
< \frac{1}{2}$.
Similarly,
$\hat{a}(\mu_1,\mu_2) 
(1-h(
\frac{\hat{b}(\mu_1)}
{\hat{a}(\mu_1,\mu_2 ) }
))$
is monotonically decreasing for $\mu_1$ and $\mu_2$
when
$\frac{\hat{b}(\mu_1)}
{\hat{a}(\mu_1,\mu_2 ) }
< \frac{1}{2}$.
\end{lem}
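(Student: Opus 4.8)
The plan is to reduce both claims to sign information about the gradient of the single function $g(b,c) := (b+c)(1-h(\frac{b}{b+c}))$, and then to combine this with explicit monotonicity of the estimates $\hat{b}$, $\hat{c}$, $\hat{a}$ in their arguments via the chain rule. First I would rewrite $g$ in the affine form $g(b,c) = (b+c) + b\log_2 b + c\log_2 c - (b+c)\log_2(b+c)$, which yields the clean partials $\partial_b g = 1 + \log_2\frac{b}{b+c}$ and $\partial_c g = 1 + \log_2\frac{c}{b+c}$. Writing $e = \frac{b}{b+c}$, the hypothesis $e < \frac{1}{2}$ immediately gives $\partial_b g < 0 < \partial_c g$. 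The second statement uses $g_a(a,b) := a(1-h(\frac{b}{a}))$; the identical computation gives $\partial_b g_a = \log_2\frac{b}{a-b} < 0$ and $\partial_a g_a = 1 + \log_2\frac{a-b}{a} > 0$ under $\frac{b}{a} < \frac{1}{2}$, so both statements rest on the same sign pattern.

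Next I would expand the estimates as power series in the intensities, which is the step that makes the monotonicity transparent. Writing $\hat{b}(\mu_1) = N_b(\mu_1)/\mu_1$ with $N_b(\mu) = (s+\frac{p_0}{2})e^{\mu} - s\,e^{(1-\alpha)\mu} - \frac{p_0}{2}$, one checks $N_b(0)=0$ and that every Taylor coefficient $b_n = \frac{(s+p_0/2)-s(1-\alpha)^n}{n!}$ ($n\ge 1$) is nonnegative, since $s(1-\alpha)^n \le s \le s+\frac{p_0}{2}$. Hence $\hat{b}(\mu_1) = \sum_{n\ge 1} b_n\mu_1^{\,n-1}$ is manifestly increasing in $\mu_1$. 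For $\hat{c}$ I would set $M_c(\mu) = (1-s+\frac{p_0}{2})e^{\mu} - (1-s)e^{(1-\alpha)\mu} - \frac{p_0}{2} = \sum_{n\ge 1} m_n\mu^n$ with all $m_n \ge 0$, substitute into the divided-difference form $\hat{c} = \mu_1\mu_2\,\frac{M_c(\mu_1)/\mu_1^2 - M_c(\mu_2)/\mu_2^2}{\mu_2-\mu_1}$, and collect terms to obtain $\hat{c}(\mu_1,\mu_2) = m_1 - \sum_{n\ge 3} m_n\sum_{j=0}^{n-3}\mu_1^{\,j+1}\mu_2^{\,n-2-j}$. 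Every monomial in the subtracted sum carries a strictly positive power of both $\mu_1$ and $\mu_2$, so $\hat{c}$ is decreasing in each of $\mu_1,\mu_2$; the same manipulation applied to $M_a(\mu)=(1+p_0)e^{\mu} - e^{(1-\alpha)\mu}-p_0$ shows $\hat{a}$ is decreasing in both arguments.

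Finally I would assemble the two ingredients by the chain rule. In the $\mu_2$-direction only $\hat{c}$ varies, so $\partial_{\mu_2} f = \partial_c g \cdot \partial_{\mu_2}\hat{c} \le 0$ because a positive factor multiplies a negative one. In the $\mu_1$-direction both $\hat{b}$ and $\hat{c}$ vary, giving $\partial_{\mu_1} f = \partial_b g \cdot \partial_{\mu_1}\hat{b} + \partial_c g \cdot \partial_{\mu_1}\hat{c}$; here the first product is (negative)$\times$(positive) and the second is (positive)$\times$(negative), so both summands are $\le 0$ and the total is $\le 0$. The $\hat{a}$-based statement follows verbatim from $\partial_a g_a > 0 > \partial_b g_a$ together with $\hat{a}$ decreasing and $\hat{b}$ increasing.

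The main obstacle I anticipate is the middle step: the estimates $\hat{c}$ and $\hat{a}$ appear as divided differences whose monotonicity in the two intensities is invisible from the closed form and whose naive differentiation is messy. The power-series rearrangement --- exhibiting each estimate as its infinitesimal-intensity limit ($m_1$, respectively $\alpha+p_0$) minus a sum of monomials increasing in both variables --- is the crux that converts the problem into a one-line sign check, and it simultaneously explains the paper's claim that the single-photon rate is estimated exactly as the decoy intensity tends to zero. The only care required is to confirm that the nonnegativity of all relevant Taylor coefficients persists throughout the admissible range $0<\alpha\le 1$, $p_0\ge 0$, $s>0$.
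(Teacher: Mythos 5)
Your proposal is correct and follows essentially the same route as the paper's Appendix~\ref{as1}: the same sign computation for the partial derivatives of $(b+c)(1-h(\frac{b}{b+c}))$ and $a(1-h(\frac{b}{a}))$ under the error-rate-below-$\frac12$ hypothesis, combined with the same power-series rearrangement of the divided differences exhibiting $\hat{b}$ as an increasing series and $\hat{a},\hat{c}$ as their $\mu_1\to 0$ limits minus sums of monomials with positive powers of both intensities. No substantive difference to report.
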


Now, we fixed a signal intensity to be $\mu_s$.
Then, Lemma \ref{thm2} implies
\begin{align*} 
R(\mu_s,\mu_1) \ge 
R(\mu_s,\mu_2) \ge 
R(\mu_s,\mu_3) \ge 
R(\mu_s,\mu_4) 
\end{align*}
for $\mu_1 <\mu_2 <\mu_s <\mu_3 <\mu_4$.
These inequalities imply that
a smaller decoy intensity has a better key generation rate
when the signal intensity is fixed.
The same relation holds for $\tilde{R}(\mu_s,\mu_d)$.
Therefore, we obtain the following theorem.
\begin{thm}\Label{12-26-1}
$\tilde{R}(\mu_s,\mu_d)$ and 
$R(\mu_s,\mu_d)$ are monotonically decreasing with respect to $\mu_d$
for a given $\mu_s$.
\end{thm}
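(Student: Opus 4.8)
The plan is to reduce the monotonicity of $R(\mu_s,\mu_d)$ and $\tilde R(\mu_s,\mu_d)$ in $\mu_d$ directly to Lemma \ref{thm2}. First I would isolate the $\mu_d$-dependence in the defining formulas (\ref{10-23-3}) and (\ref{10-23-4}): in every branch the additive terms $e^{-\mu_s}p_0$ and $p_{s,+}\eta h(s_{s,+}/p_{s,+})$ involve only the signal intensity $\mu_s$ and are thus constant once $\mu_s$ is fixed; denote their sum $e^{-\mu_s}p_0-p_{s,+}\eta h(s_{s,+}/p_{s,+})$ by $C$. Hence it suffices to track the single varying factor $\mu_s e^{-\mu_s} G$, where for $R$ one sets $G(\mu_1,\mu_2)=(\hat c(\mu_1,\mu_2)+\hat b(\mu_1))(1-h(\frac{\hat b(\mu_1)}{\hat c(\mu_1,\mu_2)+\hat b(\mu_1)}))$ and for $\tilde R$ the analogous expression with $\hat a(\mu_1,\mu_2)$ in place of $\hat c(\mu_1,\mu_2)+\hat b(\mu_1)$. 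The prefactor $\mu_s e^{-\mu_s}$ is positive and free of $\mu_d$, so monotonicity of the key rate in $\mu_d$ is equivalent to monotonicity of $G$ in $\mu_d$.

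Next I would match each branch to the two slots of $G$. In the branch $\mu_s<\mu_d$ the rate reads $\mu_s e^{-\mu_s} G(\mu_d,\mu_s)+C$, so $\mu_d$ occupies the first slot $\mu_1$; in the branch $\mu_s>\mu_d$ it reads $\mu_s e^{-\mu_s} G(\mu_s,\mu_d)+C$, so $\mu_d$ occupies the second slot $\mu_2$. That $\hat c(\mu_d,\mu_s)$ is well defined even when its first argument exceeds its second follows from the manifestly symmetric difference-quotient form in (\ref{10-8-9}), and likewise for $\hat a$ in (\ref{10-8-7}). Since Lemma \ref{thm2} asserts that $G$ is decreasing in each of its two arguments separately, under the operating condition that the argument of $h$ lies below $\frac12$, increasing $\mu_d$ decreases $G$ in either branch and therefore decreases the key rate. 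The same reasoning applied to the second half of Lemma \ref{thm2} handles $\tilde R$.

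The one step that still needs care is gluing the two monotone pieces across the boundary $\mu_d=\mu_s$. I would argue by continuity: as $\mu_d\to\mu_s$ from either side the difference-quotient forms (\ref{10-8-7}) and (\ref{10-8-9}) extend to the common limiting value $G(\mu_s,\mu_s)$, the apparent singularity at $\mu_2=\mu_1$ being removable, so $R(\mu_s,\cdot)$ is continuous at $\mu_s$. A function continuous on $(0,\infty)$ and decreasing on each of the subintervals $(0,\mu_s)$ and $(\mu_s,\infty)$ is decreasing on all of $(0,\infty)$; in particular this produces the chain $R(\mu_s,\mu_1)\ge R(\mu_s,\mu_2)\ge R(\mu_s,\mu_3)\ge R(\mu_s,\mu_4)$ for $\mu_1<\mu_2<\mu_s<\mu_3<\mu_4$ quoted just before the theorem, and hence the assertion.

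Given Lemma \ref{thm2}, the theorem is essentially a corollary, so I do not anticipate a serious obstacle here: the real work has been pushed into the lemma, whose proof in Appendix \ref{as1} must supply the two-variable monotonicity of $G$. Within the present argument the only genuine subtleties are the boundary gluing just described and the verification that the lemma's hypothesis, the phase-error condition $\frac{\hat b}{\hat c+\hat b}<\frac12$ for $R$ and $\frac{\hat b}{\hat a}<\frac12$ for $\tilde R$, holds over the relevant range of $\mu_d$. Since this is precisely the regime in which a positive key rate is possible, I would either adopt it as a standing assumption or read it off the explicit model (\ref{10-8-14}), where $\hat b$ is controlled by the small optical-imperfection parameter $s$; outside this regime $h(\cdot)\ge1$ and the key rate is nonpositive, so the monotonicity statement is vacuous.
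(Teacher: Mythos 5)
Your proposal is correct and follows essentially the same route as the paper: the theorem is deduced as an immediate corollary of Lemma \ref{thm2} by observing that the terms $e^{-\mu_s}p_0$ and $p_{s,+}\eta h(\cdot)$ do not depend on $\mu_d$, matching $\mu_d$ to the appropriate slot of the two-variable monotone expression in each branch, and chaining the inequalities across the point $\mu_d=\mu_s$. Your explicit treatment of the continuity at the branch boundary and of the hypothesis $\hat b/(\hat c+\hat b)<\tfrac12$ is somewhat more careful than the paper's one-line deduction, but it is the same argument.
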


Note that 
although the argument for $\tilde{R}(\mu_s,\mu_d)$ in Theorem \ref{12-26-1} was shown in \cite{Ma05},
that for ${R}(\mu_s,\mu_d)$ was not shown in \cite{Ma05}.
This theorem implies that 
a smaller decoy intensity yields a larger key generation rate.
In particular, 
the optimal decoy intensity is infinitesimal small.
Then, the following lemma holds, which will be shown in Appendix \ref{as1}.
\begin{lem}\Label{thm22}
We also obtain
\begin{align}
\lim_{\mu_1 \to +0}\hat{c}(\mu_1,\mu_2) 
&= (1-s)\alpha + \frac{p_0}{2} , \\
\lim_{\mu_1 \to +0}\hat{b}(\mu_1) 
&= s \alpha + \frac{p_0}{2} ,\\
\lim_{\mu_1 \to +0}\hat{a}(\mu_1,\mu_2) 
&= \alpha + p_0 .
\end{align}
\end{lem}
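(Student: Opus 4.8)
The plan is to exploit the fact that $\hat{a}$ and $\hat{c}$ share a single algebraic template, so that both limits follow from one computation, while $\hat{b}$ is handled by a direct expansion. Writing the second expressions in (\ref{10-8-7}) and (\ref{10-8-9}) in the unified form
\begin{align}
\mu_1\mu_2 \frac{F(\mu_1)-F(\mu_2)}{\mu_2-\mu_1},
\qquad
F(\mu):=\frac{N(\mu)}{\mu^2},
\end{align}
I would take $N(\mu)=(1+p_0)e^{\mu}-e^{(1-\alpha)\mu}-p_0$ for $\hat{a}$ and $N(\mu)=(1-s+\frac{p_0}{2})e^{\mu}-(1-s)e^{(1-\alpha)\mu}-\frac{p_0}{2}$ for $\hat{c}$. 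The crucial structural observation is that in both cases $N(0)=0$, i.e.\ the constant terms cancel exactly. Hence $N$ admits the expansion $N(\mu)=N'(0)\mu+O(\mu^2)$, so that $F(\mu)=N'(0)/\mu+O(1)$ carries only a simple pole at the origin rather than a double one.

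With this in hand, for a fixed $\mu_2>0$ I would let $\mu_1\to +0$. Since $F(\mu_1)=N'(0)/\mu_1+O(1)$ while $F(\mu_2)$ is a finite constant, one gets $F(\mu_1)-F(\mu_2)=N'(0)/\mu_1+O(1)$, and multiplying by the prefactor yields
\begin{align}
\mu_1\mu_2\frac{N'(0)/\mu_1+O(1)}{\mu_2-\mu_1}
=\mu_2\frac{N'(0)+O(\mu_1)}{\mu_2-\mu_1}
\longrightarrow N'(0)
\end{align}
as $\mu_1\to +0$. Thus each limit equals $N'(0)$, and a one-line differentiation gives $N'(0)=(1+p_0)-(1-\alpha)=\alpha+p_0$ for $\hat{a}$ and $N'(0)=(1-s+\frac{p_0}{2})-(1-s)(1-\alpha)=(1-s)\alpha+\frac{p_0}{2}$ for $\hat{c}$, exactly the two claimed values.

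For $\hat{b}$ in (\ref{10-8-9-c}) I would expand the numerator directly. Using $1-e^{-\alpha\mu_1}=\alpha\mu_1+O(\mu_1^2)$ and $e^{\mu_1}-1=\mu_1+O(\mu_1^2)$, the numerator equals $s\alpha\mu_1+\frac{p_0}{2}\mu_1+O(\mu_1^2)=(s\alpha+\frac{p_0}{2})\mu_1+O(\mu_1^2)$; dividing by $\mu_1$ and sending $\mu_1\to +0$ gives $s\alpha+\frac{p_0}{2}$.

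The computation is essentially routine, and the only point demanding care is the cancellation $N(0)=0$ underlying the $\hat{a}$ and $\hat{c}$ limits. Were it to fail, $F$ would retain a $1/\mu^2$ singularity and the prefactor $\mu_1\mu_2$ would be too weak to produce a finite limit; so verifying that the constant term of each numerator vanishes is the genuine content of the argument, everything else being the elementary limit displayed above together with two derivative evaluations.
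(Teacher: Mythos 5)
Your proof is correct and takes essentially the same route as the paper: the paper derives the full power-series expansions of $\hat{b}$, $\hat{c}$, and $\hat{a}$ in $\mu_1,\mu_2$ (whose constant terms are exactly $s\alpha+\frac{p_0}{2}$, $(1-s)\alpha+\frac{p_0}{2}$, and $\alpha+p_0$, all higher terms carrying a positive power of $\mu_1$) and obtains the limits by substituting $\mu_1=0$, which is precisely your observation that the constant term $N(0)$ of each numerator cancels so only the first-order coefficient $N'(0)$ survives. The only difference is one of economy: the paper computes the full series because it reuses them for the monotonicity claims of Lemma \ref{thm2}, whereas you expand only to leading order, which suffices for the limits.
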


Substituting the above in (\ref{10-23-3}) and (\ref{10-23-4}), 
we have
\begin{align*}
&\lim_{\mu_d \to 0} R(\mu_s,\mu_d)
=\lim_{\mu_d \to 0} \tilde{R}(\mu_s,\mu_d)\\
=&\mu_s e^{-\mu_s} 
(\alpha + p_0) (1-h(\frac{\alpha s + \frac{p_0}{2}}{\alpha+p_0})) 
+ e^{-\mu_s} p_0 \\
&- (1-e^{-\alpha \mu_s} + p_0) h (\frac{s(1-e^{-\alpha \mu_s}) +\frac{p_0}{2}}{1-e^{-\alpha \mu_s} + p_0 }).
\end{align*}
That is, in the limit $\mu_d \to 0$, 
we can perfectly estimate the parameters $a$, $b$, and $c$.
Since the signal intensity is related to the detection rate of the signal pulse and other factors,
it is not so simple to find the optimal signal intensity.

\section{Uncontrolled source intensities}\Label{s3}
Next, we consider the case when 
we cannot perfectly identify 
the true intensities ${\mu}_s$ and ${\mu}_d$.
Similar to Wang et al.\cite{wang2,wang3},
we assume that
the true intensities 
${\mu}_s$ and ${\mu}_d$ belong to certain intervals 
$[(1-\epsilon)\tilde{\mu}_s,(1+\epsilon)\tilde{\mu}_s]$ and $[(1-\epsilon)\tilde{\mu}_d,(1+\epsilon)\tilde{\mu}_d]$
with the error ratio $\epsilon>0$, respectively.
In this case, we have to consider the worst case 
with respect to the true intensities ${\mu}_s$ and ${\mu}_d$ 
in the intervals 
$[(1-\epsilon)\tilde{\mu}_s,(1+\epsilon)\tilde{\mu}_s]$ and $[(1-\epsilon)\tilde{\mu}_d,(1+\epsilon)\tilde{\mu}_d]$, respectively.
Indeed, 
the smaller intensity pulse is generated by the combination of
the stronger pulse and beam splitter.
If the beam splitter is well installed,
the error only comes from the error of the stronger pulse source.
In this assumption, 
the error ratio $\epsilon$ does not depend on the intensity.

Here, we should remark the relation with the setting in 
Ma et al \cite{Ma05}.
They studied the case with the statistical fluctuation of 
the measurement outcomes \cite[Section IV]{Ma05}.
However, we assume the source intensity
is fixed but is different from our intent, and
infinitely large data is available.  
That is, in our setting, there is no statistical fluctuation in our data.
Hence, our model is simpler than their model.
Although they could not obtain an analytical result in their model \cite[Section IV]{Ma05},
we derive an analytical result in our model as follows.

Now, we treat the typical case when 
true intensities are $\tilde{\mu}_1$ and $\tilde{\mu}_2$
and there is no eavesdropper.
Instead of (\ref{10-8-14}),
we assume that 
\begin{align}
p_{i,+}&=p_{i,\times}= 1-e^{-\alpha \tilde{\mu}_i} + p_0 ,\\
e_{i,+}p_{i,+} &=e_{i,\times}p_{i,\times}
= s(1-e^{-\alpha \tilde{\mu}_i}) + \frac{p_0}{2} ,
\Label{10-8-2}
\end{align}
for $i=s,d$.
When we consider that the true signal and decoy intensities are
$\mu_s$ and $\mu_d$,
the detection rate of the single photon pulse is $a$,
the rate that the single-photon pulse is detected with the phase error 
is $b$, 
and
the rate that the single-photon pulse is detected without the phase error 
is $c$, 
the two key generation rates are given as
\begin{align}
\tilde{R}
&=
{(p_0+{a} \mu_s(1-h(\frac{{b}}{{a}})) )e^{-\mu_s} - p_{s,+}\eta h(e_{s,+})
} \\
R
&={(p_0+({b}+{c}) \mu_s(1-h(\frac{{b}}{{b}+{c}})) )e^{-\mu_s} - p_{s,+}\eta h(e_{s,+})
} .
\end{align}
Then, using the functions
\begin{align*}
&f_a(p_1,p_2,\mu_1,\mu_2) \\
:=&
\left[\frac{
\mu_2^2 e^{\mu_1}(p_1 - p_0 e^{-\mu_1})
-
\mu_1^2 e^{\mu_2}(p_2 - p_0 e^{-\mu_2})
}{\mu_1 \mu_2 (\mu_2-\mu_1) } \right]_+
\\
&f_b(s_1,\mu_1) 
:= \left[\frac{s_{1} e^{\mu_1} -p_0/2 }{\mu_1}\right]_+ \\
&f_c(p_1,p_2,\mu_1,\mu_2) \\
:=&
\left[\frac{
\mu_2^2 e^{\mu_1}(p_1 - p_0 e^{-\mu_1}/2)
-
\mu_1^2 e^{\mu_2}(p_2 - p_0 e^{-\mu_2}/2)
}{\mu_1 \mu_2 (\mu_2-\mu_1) } \right]_+,
\end{align*}
we can estimate the parameters $a$, $b$, and $c$ as
\begin{align}
 \hat{a}&=
\left\{
\begin{array}{ll}
f_a( p_{d,\times},p_{s,\times},\mu_d,\mu_s) & \hbox{if } \mu_d < \mu_s \\
f_a( p_{s,\times},p_{d,\times},\mu_s,\mu_d) & \hbox{if } \mu_d > \mu_s 
\end{array}
\right.
\Label{11-3-3} \\
 \hat{b}&=
\left\{
\begin{array}{ll}
f_b(e_{d,\times} p_{d,\times},\mu_d) & \hbox{if } \mu_d < \mu_s \\
f_b(e_{s,\times} p_{s,\times},\mu_s) & \hbox{if } \mu_d > \mu_s 
\end{array}
\right.
\Label{11-3-2},\\
 \hat{c}&=
\left\{
\begin{array}{ll}
f_c((1-e_{d,\times}) p_{d,\times},(1-e_{s,\times}) p_{s,\times},\mu_d,\mu_s) 
& \hbox{if } \mu_d < \mu_s \\
f_c((1-e_{s,\times}) p_{s,\times},(1-e_{d,\times}) p_{d,\times},\mu_s,\mu_d) 
& \hbox{if } \mu_d > \mu_s 
\end{array}
\right.
\Label{11-3-1} .
\end{align}
Thus, when we consider that the true signal and decoy intensities are
$\mu_s$ and $\mu_d$,
by using the above estimates $\hat{a}$, $\hat{b}$, and $\hat{c}$, 
the two key generation rates are given as
\begin{align*}
&{R}_e(\mu_s,\mu_d,\tilde{\mu}_s,\tilde{\mu}_d) \\
:=& \mu_s e^{-\mu_s} 
(\hat{c}+\hat{b})
 (1-h(\frac{\hat{b}}{\hat{c}+\hat{b}})
) 
+ e^{-\mu_s} p_0 
- p_{s,+} \eta h ( e_{s,+}),\\
& \tilde{R}_e(\mu_s,\mu_d,\tilde{\mu}_s,\tilde{\mu}_d) \\
:= &\mu_s e^{-\mu_s} 
\hat{a} (1-h(\frac{\hat{b}}{\hat{a}})) 
+ e^{-\mu_s} p_0 - p_{s,+} \eta h ( e_{s,+}).
\end{align*}
Therefore, 
by taking the worst case, 
the key generation rates are given by
\begin{align*}
R_e(\tilde{\mu}_s,\tilde{\mu}_d,\epsilon) 
:=&
\min_{\mu_i \in [(1-\epsilon)\tilde{\mu}_i,(1+\epsilon)\tilde{\mu}_i], i=s,d } 
R_e(\tilde{\mu}_s,\tilde{\mu}_d,{\mu}_s,{\mu}_d) \\
\tilde{R}_e(\tilde{\mu}_s,\tilde{\mu}_d,\epsilon) 
:=&
\min_{\mu_i \in [(1-\epsilon)\tilde{\mu}_i,(1+\epsilon)\tilde{\mu}_i], i=s,d } 
R_e(\tilde{\mu}_s,\tilde{\mu}_d,{\mu}_s,{\mu}_d) .
\end{align*}
Indeed, it is quite difficult to find 
the values $\mu_s \in [(1-\epsilon)\tilde{\mu}_s,(1+\epsilon)\tilde{\mu}_s],
\mu_d \in [(1-\epsilon)\tilde{\mu}_d,(1+\epsilon)\tilde{\mu}_d]$
realizing the above minimums. 
However, our numerical demonstration (Fig. \ref{f5})
suggests the following
when $\epsilon >0$ is sufficiently small.
When $(1+\epsilon)\tilde{\mu}_d< (1-\epsilon)\tilde{\mu}_s$,
$\mu_d=(1+\epsilon)\tilde{\mu}_d$ and
$\mu_s=(1-\epsilon)\tilde{\mu}_s$ give the minimums.
When $(1+\epsilon)\tilde{\mu}_s< (1-\epsilon)\tilde{\mu}_d$,
$\mu_s=(1+\epsilon)\tilde{\mu}_s$ and
$\mu_d=(1-\epsilon)\tilde{\mu}_d$ give the minimums.
In the remaining case, we cannot distinguish two intensities.
So, the decoy method does not work.

\begin{figure}[htbp]
\centering
\includegraphics[width=8.00cm, clip]{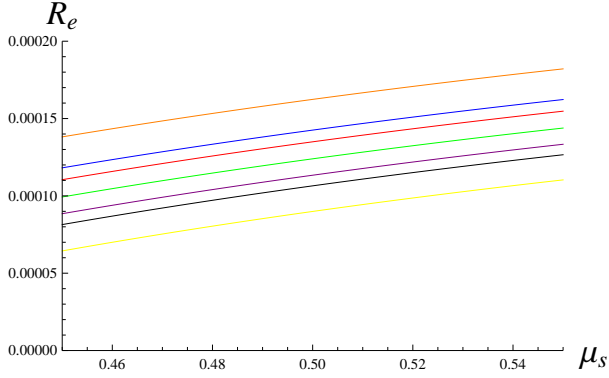}
\caption{Graphs of $R_e(0.5,0.1,{\mu}_s,{\mu}_d)$.
Here, the true decoy intensity $\mu_d$ is chosen to be
$0.90 \times 0.1$ (orange), 
$0.95 \times 0.1$ (blue), 
$0.97 \times 0.1$ (red), 
$1.00 \times 0.1$ (green), 
$1.03 \times 0.1$ (purple), 
$1.05 \times 0.1$ (black), 
and $1.10 \times 0.1$ (yellow).}
\Label{f5}
\end{figure}

Indeed, as will be shown in Theorem \ref{thm20},
$\mu_d=(1+\epsilon)\tilde{\mu}_d$ 
and
$\mu_s=(1-\epsilon)\tilde{\mu}_s$ 
give the minimum $R_e(\tilde{\mu}_s,\tilde{\mu}_d,\epsilon)$
under the limit $\tilde{\mu}_d \to 0$.

\begin{figure}[htbp]
\centering
\includegraphics[width=8.00cm, clip]{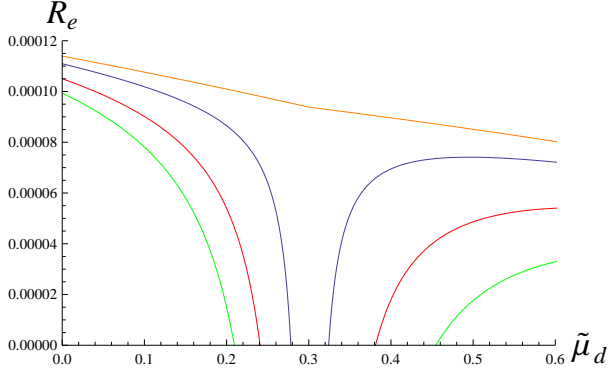}
\caption{The graphs describe the key generation rate 
${R}_e(\tilde{\mu}_s,0.3,\epsilon)$
when the signal intensity is fixed to $0.3$.
The horizontal axis expresses the decoy intensity $\tilde{\mu}_d$.
Here, the parameter $\epsilon$ describes the error rate between the true intensities and our intents,
and is chosen to be 
0\% (orange),
1\% (blue), 3\% (red), and 5\% (green).} 
\Label{f6}
\end{figure}

As is numerically demonstrated in Fig \ref{f6},
$R_e(\tilde{\mu}_s,\tilde{\mu}_d,\epsilon) $
and $\tilde{R}_e(\tilde{\mu}_s,\tilde{\mu}_d,\epsilon) $
are not necessarily monotonically decreasing with respect to $\tilde{\mu}_d$
when $\tilde{\mu}_s < \tilde{\mu}_d$.
However, 
our numerical analysis in Fig \ref{f6}, suggests that the maximums of 
$R_e(\tilde{\mu}_s,\tilde{\mu}_d,\epsilon) $
and $\tilde{R}_e(\tilde{\mu}_s,\tilde{\mu}_d,\epsilon) $
are realized by $\tilde{\mu}_d \to 0$
with fixed $\tilde{\mu}_s$ and $\epsilon>0$.
This implication can be shown as the following theorem,
which will be shown in Appendix \ref{a6}.

\begin{thm}\Label{thm20}
When a fixed intensity $\tilde{\mu}_s$
satisfies that
\begin{align}
\tilde{\mu}_s(1+\epsilon) \le
1- 
\frac{p_0}{(
\frac{\alpha}{1+\epsilon}
+p_0)
 (1-h(
\frac{s\frac{\alpha}{1+\epsilon}+\frac{p_0}{2}}
{\frac{\alpha}{1+\epsilon}+p_0}
)
},\Label{11-7-9}
\end{align}
we obtain
\begin{align}
 &
{R}_e(\tilde{\mu}_s,\epsilon)
:=
\sup_{\tilde{\mu}_d:\tilde{\mu}_d< \tilde{\mu}_s} 
{R}_e(\tilde{\mu}_s,\tilde{\mu}_d,\epsilon)
=
\lim_{\tilde{\mu}_d \to 0} 
{R}_e(\tilde{\mu}_s,\tilde{\mu}_d,\epsilon) \nonumber\\
=&
\lim_{\tilde{\mu}_d \to 0} 
R_e(\tilde{\mu}_s,\tilde{\mu}_d,(1-\epsilon)\tilde{\mu}_s,(1+\epsilon)\tilde{\mu}_d) \nonumber \\
=&
(1-\epsilon)\tilde{\mu}_s e^{-(1-\epsilon)\tilde{\mu}_s} 
(
\frac{\alpha}{1+\epsilon}
+p_0)
 (1-h(
\frac{s\frac{\alpha}{1+\epsilon}+\frac{p_0}{2}}
{\frac{\alpha}{1+\epsilon}+p_0}
)
) \nonumber \\
&+ e^{-(1-\epsilon)\tilde{\mu}_s} p_0 
- p_{s,+} \eta h (e_{s,+}),\Label{10-24-2} \\
 &
\tilde{R}_e(\tilde{\mu}_s,\epsilon)
:=
\sup_{\tilde{\mu}_d:\tilde{\mu}_d< \tilde{\mu}_s} 
\tilde{R}_e(\tilde{\mu}_s,\tilde{\mu}_d,\epsilon)
=
\lim_{\tilde{\mu}_d \to 0} 
\tilde{R}_e(\tilde{\mu}_s,\tilde{\mu}_d,\epsilon) \nonumber\\
=&
\lim_{\tilde{\mu}_d \to 0} 
R_e(\tilde{\mu}_s,\tilde{\mu}_d,(1-\epsilon)\tilde{\mu}_s,(1+\epsilon)\tilde{\mu}_d) \nonumber \\
=&
(1-\epsilon)\tilde{\mu}_s e^{-(1-\epsilon)\tilde{\mu}_s} 
(
\frac{\alpha}{1+\epsilon}
+p_0)
 (1-h(
\frac{s\frac{\alpha}{1+\epsilon}+\frac{p_0}{2}}
{\frac{\alpha}{1+\epsilon}+p_0}
)
) \nonumber \\
&+ e^{-(1-\epsilon)\tilde{\mu}_s} p_0 
- p_{s,+} \eta h (e_{s,+}). \Label{10-24-3} 
\end{align}
\end{thm}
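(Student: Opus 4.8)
The plan is to read each displayed chain in (\ref{10-24-2}) and (\ref{10-24-3}) as three separate assertions and prove them in turn: first that, in the limit $\tilde{\mu}_d\to 0$, the worst-case minimization over the true intensities is attained at the corner $\mu_s=(1-\epsilon)\tilde{\mu}_s$, $\mu_d=(1+\epsilon)\tilde{\mu}_d$ (the middle equality); second that the value at this corner converges to the explicit expression (the last equality); and third that this limit is in fact the supremum over $\tilde{\mu}_d<\tilde{\mu}_s$ (the first equality). It is convenient to keep in mind that the measured rates $p_{s,+},e_{s,+},p_{d,\times},e_{d,\times}$ are produced by the nominal intensities through (\ref{10-8-2}) and are therefore frozen, while the uncertain true intensities $\mu_s,\mu_d$ enter only through the Poisson weight $\mu_se^{-\mu_s}$ and through the intensity slots of the estimators $f_a,f_b,f_c$. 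I treat only $R_e$; the statement for $\tilde{R}_e$ is identical, because in the limit $\hat{a}$ and $\hat{c}+\hat{b}$ share a common value and $\hat{b}/\hat{a}\to\hat{b}/(\hat{c}+\hat{b})$, so the two formulas yield the same limiting number, exactly as the controlled-case identity $\lim_{\mu_d\to0}R=\lim_{\mu_d\to0}\tilde{R}$.

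For the second assertion I would generalize Lemma \ref{thm22}. Substituting (\ref{10-8-2}) into $f_b,f_c$ with estimator decoy intensity $\mu_d$ and expanding to first order in $\tilde{\mu}_d$, the linear terms carry the ratio $\tilde{\mu}_d/\mu_d$, so that at $\mu_d=(1+\epsilon)\tilde{\mu}_d$ one gets $\lim\hat{b}=s\frac{\alpha}{1+\epsilon}+\frac{p_0}{2}$ and $\lim(\hat{c}+\hat{b})=\frac{\alpha}{1+\epsilon}+p_0$, the signal dependence dropping out just as in Lemma \ref{thm22}; feeding these constants, together with the frozen signal data and the Poisson weight at $\mu_s=(1-\epsilon)\tilde{\mu}_s$, into the definition of $R_e$ reproduces the right-hand side of (\ref{10-24-2}). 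For the first assertion the two true intensities decouple in the limit, since the limiting estimates no longer depend on the signal intensity. In $\mu_d$ the limiting single-photon contribution is governed by the effective transmission $\alpha\tilde{\mu}_d/\mu_d$ and, because $s<\frac12$ keeps $\hat{b}/(\hat{c}+\hat{b})<\frac12$, it decreases as $\mu_d$ grows by the limiting form of Lemma \ref{thm2}; hence the decoy minimizer is the top of its interval, $\mu_d=(1+\epsilon)\tilde{\mu}_d$. In $\mu_s$ only $\mu_se^{-\mu_s}$ and $e^{-\mu_s}p_0$ vary, giving $\frac{d}{d\mu_s}R_e=e^{-\mu_s}\big[(1-\mu_s)K-p_0\big]$ with $K=(\frac{\alpha}{1+\epsilon}+p_0)(1-h(\frac{s\alpha/(1+\epsilon)+p_0/2}{\alpha/(1+\epsilon)+p_0}))$; condition (\ref{11-7-9}) is exactly $(1+\epsilon)\tilde{\mu}_s\le 1-p_0/K$, which makes this derivative nonnegative throughout $[(1-\epsilon)\tilde{\mu}_s,(1+\epsilon)\tilde{\mu}_s]$, so the signal minimizer is the bottom of its interval, $\mu_s=(1-\epsilon)\tilde{\mu}_s$.

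For the third assertion I would use that a minimum never exceeds a single evaluation, $R_e(\tilde{\mu}_s,\tilde{\mu}_d,\epsilon)\le R_e(\tilde{\mu}_s,\tilde{\mu}_d,(1-\epsilon)\tilde{\mu}_s,(1+\epsilon)\tilde{\mu}_d)$ for every $\tilde{\mu}_d<\tilde{\mu}_s$. In this corner rate the signal quantities are frozen, so its entire $\tilde{\mu}_d$-dependence sits in the single-photon contribution $(\hat{c}+\hat{b})(1-h(\frac{\hat{b}}{\hat{c}+\hat{b}}))$ evaluated with measured decoy rates coming from $\tilde{\mu}_d$ and estimator decoy intensity $(1+\epsilon)\tilde{\mu}_d$. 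If this contribution is monotonically decreasing in $\tilde{\mu}_d$, the corner rate is bounded above by its $\tilde{\mu}_d\to0$ limit, which by the second assertion is the asserted value $L$; taking the supremum gives $\sup\le L$, while the first and second assertions give $\lim_{\tilde{\mu}_d\to0}R_e(\tilde{\mu}_s,\tilde{\mu}_d,\epsilon)=L\le\sup$, so the two agree. The main obstacle is precisely this last monotonicity: it is the uncontrolled analogue of Lemma \ref{thm2}, but now the measured and estimator decoy intensities differ (scaling as $\tilde{\mu}_d$ and $(1+\epsilon)\tilde{\mu}_d$), so that lemma does not apply verbatim. I expect to settle it by a direct sign analysis of the $\tilde{\mu}_d$-derivative, most cleanly by showing the contribution is separately nonincreasing in the measured decoy rate and in the estimator decoy intensity and then restricting to the line $\mu_d=(1+\epsilon)\tilde{\mu}_d$, with $\hat{b}/(\hat{c}+\hat{b})<\frac12$ again guaranteeing that the entropy factor moves the right way.
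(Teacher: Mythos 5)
Your architecture (locate the worst-case corner, compute the limit along it, then sandwich the supremum) runs parallel to the paper's, and your limit computation and your use of condition (\ref{11-7-9}) to place the signal minimizer at $(1-\epsilon)\tilde{\mu}_s$ match the paper's Lemma \ref{lem23} and the final step of its proof. But the step you yourself flag as the main obstacle --- that the corner rate, with measured decoy data generated by $\tilde{\mu}_d$ and estimator decoy intensity $(1+\epsilon)\tilde{\mu}_d$, is monotonically decreasing in $\tilde{\mu}_d$ --- is attacked by a decomposition that cannot close. You propose to show the single-photon contribution is ``separately nonincreasing in the measured decoy rate and in the estimator decoy intensity.'' The second half is the controlled-case Lemma \ref{thm2}, but the first half has the wrong sign: increasing the measured decoy detection data (with the estimator intensity held fixed) \emph{raises} the inferred single-photon yield and \emph{raises} $(\hat{c}+\hat{b})(1-h(\hat{b}/(\hat{c}+\hat{b})))$. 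This is precisely the content of the paper's Lemma \ref{lem22}, items (ii), (iii), (v), (vii), and it is the very monotonicity that places the worst-case true decoy intensity at the bottom of its interval in the first place. So along the diagonal the two effects compete, and the conclusion requires a quantitative comparison, not two one-sided monotonicities. The paper's Lemma \ref{lem23} resolves it by the substitution $\tilde{\alpha}:=\alpha(1+\epsilon)^{-1}$, which absorbs the growth of the measured rates into a rescaled transmission and reduces the expression to the power series of Lemma \ref{thm2} (with $\alpha$ replaced by $\tilde{\alpha}$) plus the residual term $-\mu_d e^{\mu_s}(1-s)(e^{-\tilde{\alpha}\mu_s}-e^{-\alpha\tilde{\mu}_s})/(\mu_s(\mu_s-\mu_d))$, whose monotonic decrease follows from $e^{-\tilde{\alpha}\mu_s}-e^{-\alpha\tilde{\mu}_s}\ge 0$. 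Without this (or an equivalent quantitative argument) your $\sup\le L$ half is unproven.

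A secondary gap: your ``first assertion'' interchanges $\lim_{\tilde{\mu}_d\to 0}$ with the minimization over $\mu_s\in[(1-\epsilon)\tilde{\mu}_s,(1+\epsilon)\tilde{\mu}_s]$, whose feasible set does not shrink as $\tilde{\mu}_d\to 0$. The paper justifies this exchange explicitly via uniform convergence in $\mu_s$ and $\tilde{\mu}_s$ together with a max--min inequality, in equations (\ref{11-7-6})--(\ref{11-7-7}); you should supply the analogous argument rather than decoupling the two variables only ``in the limit.'' Your dispatch of the $\tilde{R}_e$ case by noting that $\hat{a}$ and $\hat{c}+\hat{b}$ share the same limit is acceptable and is essentially what the paper does.
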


This theorem implies that 
the infinitesimal small decoy intensity gives the best key generation rate.
Using this theorem,
we numerically demonstrate ${R}_e(\tilde{\mu}_s,\epsilon)$ in Fig \ref{f7}.
Then, we find the optimal signal intensity for our method as in Table \ref{T1}.

\begin{figure}[htbp]
\centering
\includegraphics[width=8.00cm, clip]{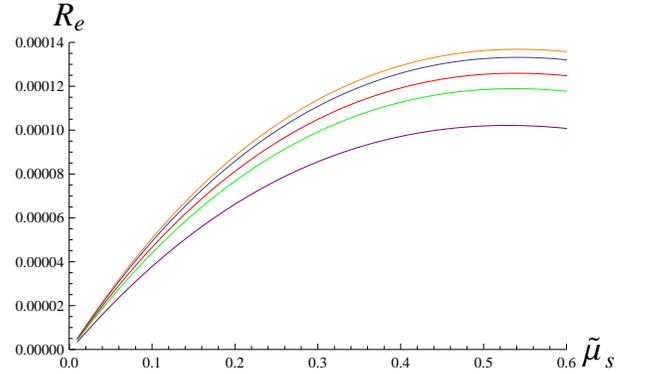}
\caption{The graphs describe the optimal key generation rate 
${R}_e(\tilde{\mu}_s,\epsilon)$.
The horizontal axis expresses the signal intensity $\tilde{\mu}_s$.
Here, the parameter $\epsilon$ describes the error rate between the true intensities and our intents,
and is chosen to be 
0\% (orange),
1\% (blue), 3\% (red), 5\% (green), and 10\% (purple).} 
\Label{f7}
\end{figure}

\begin{table}[htbp]
\centering
\caption{
The optimal key generation rate ${R}_e(\tilde{\mu}_s,+0,\epsilon)$,
and 
the optimal signal intensities $\tilde{\mu}_s$ 
when the decoy intensity $\tilde{\mu}_d$ is infinitesimal.
Here, the parameter $\epsilon$ describes the error rate between the true intensities and our intents.
}\Label{T1}
\begin{tabular}{|l|l|l|l|}
\hline
$\epsilon$ &$\tilde{\mu}_s$ & ${R}_e$ \\
\hline
0\% & 0.539023 &  0.000136994 \\
\hline
1\% & 0.539212 & 0.000133318 \\
\hline
3\% & 0.539293 & 0.000125944 \\
\hline
5\% & 0.535419 & 0.000119117 \\
\hline
10\% & 0.528461 &  0.000102458\\
\hline
\end{tabular}
\end{table}

\section{Relation with the finite-length case}\Label{s4}
However, in the realistic setting,
we have to care about the length of our code.
That is, we have to estimate 
the parameters $a$, $b$, and $c$ from the finite number of pulses.
Such a case has been discussed in the recent paper \cite{HN}.
Due to the analysis in \cite{HN},
the errors of the estimates $a$, $b$, and $c$
become large when the decoy intensity is close to zero.
So, we cannot say that a smaller decoy intensity is better in the real implementation.
However, when the size of code is sufficiently large,
we can expect that the contribution of such errors is not so large.
To verify this implication,
we numerically compare our asymptotic key generation rate 
$R(\mu_d,\mu_s)$ with the rates given in \cite{HN} as Fig. \ref{f4}.
The numerical comparison suggests that 
the finite-length case has a trend similar to the asymptotic case.
The paper \cite{AT11} reports that privacy amplification has been implemented with the bit-length of raw keys up to $2 \times 10^{6}$.
However, there is a possibility to improve the method \cite{AT11}.
The forthcoming paper \cite{HT} will propose a new algorithm to realize secure hash functions.
Combining the method \cite{AT11} and the algorithm \cite{HT}, 
the bit-length of raw keys was increased up to $10^{8}$ \cite{Tsutu}.
So, we can conclude that our asymptotic analysis has reflects the trends of realizable finite-length codes.

\begin{figure}[htbp]
\centering
\includegraphics[width=8.00cm, clip]{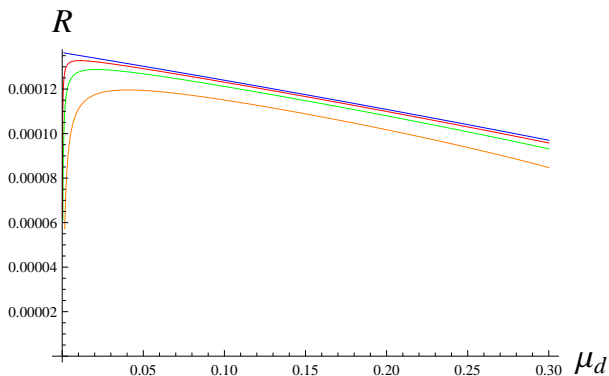}
\caption{The blue graph describes the asymptotic key generation rate 
$R(0.5,\mu_d)$ with the signal intensity $0.5$.
The horizontal axis expresses the decoy intensity $\mu_d$.
Other graphs describe the key generation rate with finite-length codes, in which,
the number of each transmitted decoy pulses is one tenth of the number of the transmitted signal pulses.
The bit-length of raw keys is chosen to be 
$10^{8}$ (orange), 
$10^{9}$ (green), and
$10^{10}$ (red).}
\Label{f4}
\end{figure}

\section{Conclusion}\Label{s5}
First, we have improved the decoy protocol 
when we employ only one decoy intensity and the vacuum pulse
by introducing the new parameterization of the channel.
Then, 
in both the existing method and our improved method,
we have shown that 
a smaller decoy intensity gives a larger 
key generation rate in the asymptotic setting.
Hence, the infinitesimal decoy intensity realizes the optimal asymptotic 
key generation rate,
and yields the perfect estimation of the counting rate and the phase error rate 
of the single photon pulse.
We also verify the latter conclusion 
even when we cannot control or identify 
the intensities ${\mu}_d$ and ${\mu}_s$
by assuming an assumption similar to Wang et al.\cite{wang2,wang3}.
Then, we have numerically optimized the signal intensity under the optimal decoy intensity.
Finally, we have numerically checked that
our conclusion is almost valid even for finite-length code \cite{HN}.

\subsection*{Acknowledgment}
The author thanks 
Prof. Akihisa Tomita, Dr. Toyohiro Tsurumaru,
and Mr. Ryota Nakayama
for valuable comments.
He is partially supported by a MEXT Grant-in-Aid for Scientific Research (A) No. 23246071. 
He is also partially supported by the National Institute of Information and Communication Technology (NICT), Japan.
The Centre for Quantum Technologies is funded by the Singapore Ministry of Education and the National Research Foundation as part of the Research Centres of Excellence programme.


\appendix

\section{Proof of Lemmas \ref{thm2} and \ref{thm22}}\Label{as1}
Define the function $f(c,b):=
(c+b)(1-h (\frac{b}{c+b}))$.
Since
the assumption implies 
\begin{align*}
\frac{\partial f}{\partial c}
= 1+ \log \frac{c}{c+b} >0,\quad
\frac{\partial f}{\partial b}
= 1+ \log \frac{b}{c+b} <0,
\end{align*}
in order to show the first argument of Lemma \ref{thm2},
it is sufficient to show that
$\hat{c}(\mu_1,\mu_2)$
is monotonically decreasing for $\mu_1$ and $\mu_2$
and 
$\hat{b}(\mu_1)$
is monotonically increasing for $\mu_1$.
Similarly, 
define the function $g(a,b):=
a(1-h (\frac{b}{a}))$.
Since
the assumption implies 
\begin{align*}
\frac{\partial g}{\partial a}
= 1+ \log \frac{a-b}{a} >0,\quad
\frac{\partial f}{\partial b}
= \log \frac{b}{a-b} <0,
\end{align*}
in order to show the second argument of Lemma \ref{thm2},
it is sufficient to show that
$\hat{a}(\mu_1,\mu_2)$
is monotonically decreasing for $\mu_1$ and $\mu_2$
and 
$\hat{b}(\mu_1)$
is monotonically increasing for $\mu_1$.

Since
\begin{align}
 & \hat{b}(\mu_1)
=
\frac{s(e^{\mu_1}-e^{(1-\alpha )\mu_1}) 
+p_0 (e^{\mu_1}-1) /2}{\mu_1} \nonumber \\
 =&
s \alpha +\frac{p_0}{2}+
\sum_{n=2}^{\infty}
(s (1-(1-\alpha )^n) +\frac{p_0}{2} )
\frac{\mu_1^{n-1}}{n!}
\Label{10-27-1}
\end{align}
and 
$s (1-(1-\alpha )^n) +\frac{p_0}{2} \ge 0$,
$\hat{b}(\mu_1)$
is monotonically increasing for $\mu_1$.

\begin{widetext}
Further,
\begin{align}
& \hat{c}(\mu_1,\mu_2)
=
(\mu_2 \mu_1)
\frac
{
\frac{(1-s+p_0/2) e^{\mu_1} -(1-s)e^{ (1-\alpha) \mu_1}-p_0/2}{\mu_1^2}
-
\frac{(1-s+p_0/2) e^{\mu_2} -(1-s)e^{ (1-\alpha) \mu_2}-p_0/2}{\mu_2^2}
}
{\mu_2-\mu_1 } \nonumber \\
=&
(\mu_2 \mu_1)
\frac
{
\sum_{n=1}^{\infty}
\frac{(1-s+p_0/2) \mu_1^{n-2}  -(1-s)(1-\alpha)^n \mu_1^{n-2} }{n ! }
-
\sum_{n=1}^{\infty}
\frac{(1-s+p_0/2) \mu_2^{n-2}  -(1-s)(1-\alpha)^n \mu_2^{n-2} }{n ! }
}
{\mu_2-\mu_1 } \nonumber \\
=&
(\mu_2 \mu_1)
\sum_{n=1}^{\infty}
\frac{(1-s+p_0/2 -(1-s)(1-\alpha)^n) (\mu_1^{n-2}-\mu_2^{n-2}) }{n ! (\mu_2-\mu_1)}\\
 =&
p_0/2+(1-s)\alpha
+
(\mu_2 \mu_1)
\sum_{n=3}^{\infty}
\frac{(1-s+p_0/2 -(1-s)(1-\alpha)^n ) (\mu_1^{n-2}-\mu_2^{n-2}) }{n ! (\mu_2-\mu_1)}\nonumber \\
 =&
p_0/2+(1-s)\alpha
-
(\mu_2 \mu_1)
\sum_{n=3}^{\infty}
\frac{1-s+p_0/2 -(1-s)(1-\alpha)^n }{n !} 
(\sum_{m=0}^{n-3}
\mu_1^m \mu_2^{n-3-m} )\nonumber \\
 =&
p_0/2+(1-s)\alpha
-
\sum_{n=3}^{\infty}
\frac{(1-s)(1-(1-\alpha)^n)+p_0/2  }{n !} 
(\sum_{m=0}^{n-3}
\mu_1^{m+1} \mu_2^{n-2-m} )\Label{10-23-1}.
\end{align}
\end{widetext}
Here,
$\frac{(1-s)(1-(1-\alpha)^n) +p_0/2 }{n !} $ is always positive.
Hence, $\hat{c}(\mu_1,\mu_2)$
is monotonically decreasing for $\mu_1$ and $\mu_2$.

Since similar to (\ref{10-23-1}), 
we have 
\begin{align}
&\hat{a}(\mu_1,\mu_2) \nonumber \\
 =&
p_0+\alpha \nonumber \\
&-
\sum_{n=3}^{\infty}
\frac{(1-(1-\alpha)^n)+p_0  }{n !} 
(\sum_{m=0}^{n-3}
\mu_1^{m+1} \mu_2^{n-2-m} )\Label{10-23-2},
\end{align}
$\hat{a}(\mu_1,\mu_2)$ is monotonically decreasing for $\mu_1$ and $\mu_2$.
Hence, we obtain Lemma \ref{thm2}.

Lemma \ref{thm22} can be proven by substituting $0$ into $\mu_1$
in (\ref{10-27-1}), (\ref{10-23-1}), and (\ref{10-23-2}).
\endproof

\section{Proof of Theorem \ref{thm20}}\Label{a6}
For a proof of Theorem \ref{thm20}, 
we prepare the following two lemmas
under the assumption that $\mu_d < \mu_s$.

\begin{widetext}
\begin{lem}\Label{lem22}
Assume that
\begin{align}
\hat{a}&=f_c(1-e^{-\alpha \tilde{\mu}_d} + p_0,1-e^{-\alpha \tilde{\mu}_s} + p_0,\mu_d,\mu_s) 
=
\frac{\mu_d^2 e^{\mu_d} (1-e^{-\alpha \tilde{\mu}_d}+ p_0)
-\mu_s^2 e^{\mu_s}(1-e^{-\alpha \tilde{\mu}_s}+ p_0)}
{\mu_d \mu_s (\mu_s-\mu_d) }, \\
\hat{b}&=f_b(s(1-e^{-\alpha \tilde{\mu}_d}) + \frac{p_0}{2},\mu_d)
=
\frac{
(s(1-e^{-\alpha \tilde{\mu}_d}) + \frac{p_0}{2})e^{\mu_d}-p_0/2
}{\mu_d},
\\
\hat{c}&=f_c((1-s)(1-e^{-\alpha \tilde{\mu}_d}) + \frac{p_0}{2},(1-s)(1-e^{-\alpha \tilde{\mu}_s}) + \frac{p_0}{2},\mu_d,\mu_s) \nonumber \\
&=
\frac{\mu_d^2 e^{\mu_d} ( (1-s)(1-e^{-\alpha \tilde{\mu}_d})+ p_0/2)
-\mu_s^2 e^{\mu_2}( (1-s)1-e^{-\alpha \tilde{\mu}_s}+ p_0/2)}
{\mu_d \mu_s (\mu_s-\mu_d) }.
\end{align}
Then,
\begin{enumerate}[(i)]
\item 
$\hat{a}$ is monotonically increasing with respect to $\tilde{\mu}_d$
and 
monotonically decreasing with respect to $\tilde{\mu}_s$.

\item 
$\hat{b}$ is monotonically increasing with respect to $\tilde{\mu}_d$.

\item 
$\hat{c}$ is monotonically increasing with respect to $\tilde{\mu}_d$
and 
monotonically decreasing with respect to $\tilde{\mu}_s$.

\item 
$\frac{\hat{b}}{\hat{a}}$ is 
monotonically decreasing with respect to $\tilde{\mu}_d$.

\item 
$\frac{\hat{b}}{\hat{c}+\hat{b}}$ is 
monotonically decreasing with respect to $\tilde{\mu}_d$.

\item 
$\hat{a}
(1-h(
\frac{\hat{b}}{\hat{a}}))$ 
is monotonically increasing with respect to $\tilde{\mu}_d$
and monotonically decreasing with respect to $\tilde{\mu}_s$.

\item 
$(\hat{c}+\hat{b})
(1-h(
\frac{\hat{b}}{\hat{c} +\hat{b}}))$ 
is monotonically increasing with respect to $\tilde{\mu}_d$
and monotonically decreasing with respect to $\tilde{\mu}_s$.
\end{enumerate}
\end{lem}

\begin{proof}
First, we notice that
$(1-e^{-\alpha \tilde{\mu}_i})$ is monotonically increasing with respect to $\tilde{\mu}_i$ for $i=s,d$.
Using this fact, we can show the items (i), (ii), and (iii).
 
Next, we will show (iv). 
Since
\begin{align*}
 \hat{b}
=
\frac{\mu_s (\mu_s-\mu_d) e^{\mu_d} s(1-e^{-\alpha \tilde{\mu}_d})+\mu_s (\mu_s-\mu_d) e^{\mu_d}(1-e^{-\mu_d})p_0/2}
{\mu_d \mu_s (\mu_s-\mu_d) },
\end{align*}
we have
\begin{align*}
 \frac{\hat{b} }{\hat{a}} 
&=
\frac{\mu_s (\mu_s-\mu_d) e^{\mu_d} s(1-e^{-\alpha \tilde{\mu}_d})+\mu_s (\mu_s-\mu_d) e^{\mu_d}(1-e^{-\mu_d})p_0/2}{\mu_d^2 e^{\mu_d} (1-e^{-\alpha \tilde{\mu}_d}+ p_0)
-\mu_s^2 e^{\mu_s}(1-e^{-\alpha \tilde{\mu}_s}+ p_0)} \\
&=
\frac{\mu_s (\mu_s-\mu_d) e^{\mu_d} s(1-e^{-\alpha \tilde{\mu}_d})+\mu_s (\mu_s-\mu_d) e^{\mu_d}(1-e^{-\mu_d})p_0/2}
{\mu_d^2 e^{\mu_d} (1-e^{-\alpha \tilde{\mu}_d})
-\mu_s^2 e^{\mu_s}(1-e^{-\alpha \tilde{\mu}_s})
+p_0 (\mu_d^2 e^{\mu_d}-\mu_s^2 e^{\mu_s})
} \\
&=
\frac{A_1(1-e^{-\alpha \tilde{\mu}_d})+A_2}
{ A_3(1-e^{-\alpha \tilde{\mu}_d})
-A_4(1-e^{-\alpha \tilde{\mu}_s})+A_5},
\end{align*}
where
\begin{align*}
A_1&:=\mu_s (\mu_s-\mu_d) e^{\mu_d} s ,\quad
A_2:=\mu_s (\mu_s-\mu_d) e^{\mu_d}(1-e^{-\mu_d})p_0/2, \\
A_3&:=\mu_d^2 e^{\mu_d}, \quad
A_4:=\mu_s^2 e^{\mu_s}, \quad
A_5:=p_0 (\mu_d^2 e^{\mu_d}-\mu_s^2 e^{\mu_s}).
\end{align*}
As is shown below, we have
$ \frac{A_1}{A_3}
\le \frac{A_2}{A_5-A_4(1-e^{-\alpha \tilde{\mu}_s})}$.
Thus, 
since $1-e^{-\alpha \tilde{\mu}_d}$ is monotonically increasing with respect to $\tilde{\mu}_d$,
$\frac{A_1(1-e^{-\alpha \tilde{\mu}_d})+A_2}
{ A_3(1-e^{-\alpha \tilde{\mu}_d})
-A_4(1-e^{-\alpha \tilde{\mu}_s})
+A_5}$
is monotonically decreasing with respect to $\tilde{\mu}_d$.
Then, we obtain the item (iv).
Now, we will show 
$ \frac{A_1}{A_3}
\le \frac{A_2}{A_5-A_4(1-e^{-\alpha \tilde{\mu}_s})}$.
Since $s \le 1/2$
and
$\mu_s^2 e^{\mu_s}\ge \mu_d^2$,
\begin{align*}
& \mu_d^2 e^{\mu_d} {(1-e^{-\mu_d})p_0/2}
\ge
\mu_d^2 e^{\mu_d} {s (1-e^{-\mu_d})p_0}
= s p_0 (\mu_d^2 e^{\mu_d} -s \mu_d^2 ) \\
\ge & s p_0 (\mu_d^2 e^{\mu_d}-\mu_s^2 e^{\mu_s})
\ge 
s p_0 (\mu_d^2 e^{\mu_d}-\mu_s^2 e^{\mu_s})
-s\mu_s^2 e^{\mu_s}(1-e^{-\alpha \tilde{\mu}_s}).
\end{align*}
Hence,
\begin{align*}
\frac{s}{\mu_d^2 e^{\mu_d} } 
\le 
\frac
{(1-e^{-\mu_d})p_0/2}
{p_0 (\mu_d^2 e^{\mu_d}-\mu_s^2 e^{\mu_s})
-\mu_s^2 e^{\mu_s}(1-e^{-\alpha \tilde{\mu}_s})}.
\end{align*}
Multiplying $\mu_s (\mu_s-\mu_d)$,
we have
\begin{align*}
& \frac{A_1}{A_3}
=\frac{\mu_s (\mu_s-\mu_d) e^{\mu_d} s}{\mu_d^2 e^{\mu_d} } 
\le 
\frac{\mu_s (\mu_s-\mu_d) e^{\mu_d}(1-e^{-\mu_d})p_0/2}
{p_0 (\mu_d^2 e^{\mu_d}-\mu_s^2 e^{\mu_s})
-\mu_s^2 e^{\mu_s}(1-e^{-\alpha \tilde{\mu}_s})}
=
\frac{A_2}{A_5-A_4(1-e^{-\alpha \tilde{\mu}_s})}.
\end{align*}

Next, we will show (iv). 
We have
\begin{align*}
 \frac{\hat{b} }{\hat{c}+\hat{b} } 
&=
\frac{\mu_s e^{\mu_d} (\mu_s-\mu_d) s(1-e^{-\alpha \tilde{\mu}_d})+\mu_s (\mu_s-\mu_d) e^{\mu_d}(1-e^{-\mu_d})p_0/2}
{\mu_s e^{\mu_d}(\mu_s-\mu_d s)(1-e^{-\alpha \tilde{\mu}_d})+
\mu_s e^{\mu_d}(1- e^{-\mu_d})(\mu_s-\mu_d/2)p_0 -B_1} \\
 &=
\frac{B_2(1-e^{-\alpha \tilde{\mu}_d})+B_3}{B_4(1-e^{-\alpha \tilde{\mu}_d})+B_5},
\end{align*}
where
\begin{align*}
B_1:=&
\mu_d^2 e^{\mu_s}
((1-s)(1-e^{-\alpha \tilde{\mu}_s})+ (1- e^{-\mu_s})p_0 /2) \\
B_2:=& \mu_s e^{\mu_d} (\mu_s-\mu_d) s \\
B_3:=& \mu_s (\mu_s-\mu_d) e^{\mu_d}(1-e^{-\mu_d})p_0/2\\ 
B_4:=&\mu_s e^{\mu_d}(\mu_s-\mu_d s) \\
B_5:=& \mu_s e^{\mu_d}(1- e^{-\mu_d})(\mu_s-\mu_d/2)p_0 -B_1.
\end{align*}
As is shown below, we have
$ \frac{B_2}{B_4} \le \frac{B_3}{B_5}$.
Thus, since $1-e^{-\alpha \tilde{\mu}_d}$ is monotonically increasing with respect to $\tilde{\mu}_d$,
$\frac{B_2(1-e^{-\alpha \tilde{\mu}_d})+B_3}{B_4(1-e^{-\alpha \tilde{\mu}_d})+B_5}$
is monotonically decreasing with respect to $\tilde{\mu}_d$.
Then, we obtain the desired argument for $\frac{\hat{b} }{\hat{c}+\hat{b}}$.

Now, we will show $ \frac{B_2}{B_4} \le \frac{B_3}{B_5}$.
Since $s< \frac{1}{2}$,
we have $
\frac{s}{(\mu_s-\mu_d s)} 
< 
\frac{1}{2(\mu_s-\mu_d/2)}$,
which implies
\begin{align*}
\frac{B_2}{B_4}&=
\frac{\mu_s e^{\mu_d} (\mu_s-\mu_d) s}{\mu_s e^{\mu_d}(\mu_s-\mu_d s)}
=
\frac{s(\mu_s-\mu_d)}{(\mu_s-\mu_d s)} 
< 
\frac{\mu_s-\mu_d}{2(\mu_s-\mu_d/2)} \\
 &=\frac{\mu_s (\mu_s-\mu_d) e^{\mu_d}(1-e^{-\mu_d})p_0/2}{\mu_s e^{\mu_d}(1- e^{-\mu_d})(\mu_s-\mu_d/2) p_0} 
<
\frac{\mu_s (\mu_s-\mu_d) e^{\mu_d}(1-e^{-\mu_d})p_0/2}{\mu_s e^{\mu_d}(1- e^{-\mu_d})(\mu_s-\mu_d/2)p_0 -B_1}
=\frac{B_3}{B_5}.
\end{align*}

For a proof of (vi), we use the function $g(a,b)$ defined in the proof of Theorem \ref{thm2}.
Since $g(a,b)$ is monotonically increasing with respect to $a$ and 
monotonically decreasing with respect to $b$,
we obtain the item (vi).
The item (vii) can be shown in the same way by replacing 
the function $g(a,b)$ by the function $f(c,b)$ defined in the proof of Theorem \ref{thm2}.
\end{proof}

\begin{lem}\Label{lem23}
When $\mu_s\le (1+\epsilon)\tilde{\mu}_s  $,
\begin{align}
 & \sup_{\mu_d: 0 < \mu_d< \mu_s}
f_a((1-e^{-\alpha (1+\epsilon)^{-1}\mu_d}) + p_0,
(1-e^{-\alpha \tilde{\mu}_s}) + p_0,\mu_d,\mu_s) 
\nonumber \\
 =&
\lim_{\mu_d \to +0}
f_a((1-e^{-\alpha (1+\epsilon)^{-1}\mu_d}) + p_0,
(1-e^{-\alpha \tilde{\mu}_s}) + p_0,\mu_d,\mu_s)
 =
\alpha (1+\epsilon)^{-1} +p_0,
\Label{11-7-3}\\
 & \sup_{0 < \mu_d}
f_b(s(1-e^{-\alpha (1+\epsilon)^{-1}\mu_d}) + \frac{p_0}{2},\mu_d) \nonumber \\
 =&\lim_{\mu_d \to +0}
f_b(s(1-e^{-\alpha (1+\epsilon)^{-1}\mu_d}) + \frac{p_0}{2},\mu_d) 
=
s\alpha (1+\epsilon)^{-1} +\frac{p_0}{2},\Label{11-7-2}\\
 & \sup_{\mu_d: 0 < \mu_d< \mu_s}
f_c((1-s)(1-e^{-\alpha (1+\epsilon)^{-1}\mu_d}) + \frac{p_0}{2},(1-s)(1-e^{-\alpha \tilde{\mu}_s}) + \frac{p_0}{2},\mu_d,\mu_s) \nonumber \\
 =&
\lim_{\mu_d \to +0}
f_c((1-s)(1-e^{-\alpha (1+\epsilon)^{-1}\mu_d}) + \frac{p_0}{2},(1-s)(1-e^{-\alpha \tilde{\mu}_s}) + \frac{p_0}{2},\mu_d,\mu_s)
 =
(1-s)\alpha (1+\epsilon)^{-1} +\frac{p_0}{2}.
\Label{11-7-1}
\end{align}
\end{lem}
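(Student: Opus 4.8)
The plan is to treat each of the three identities as a one–variable statement in $\mu_d$, holding $\mu_s,\tilde{\mu}_s$ and the parameters $\alpha,s,p_0,\epsilon$ fixed, and to prove it in two stages: first evaluate the $\mu_d\to+0$ limit, then identify that limit with the claimed extremum by a monotonicity argument. Throughout I write $\beta:=\alpha(1+\epsilon)^{-1}$ for the effective transmission carried by the decoy term, so that the decoy detection rates entering \eqref{11-7-3}--\eqref{11-7-1} are exactly the controlled rates of \eqref{10-8-14} with $\alpha$ replaced by $\beta$.

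For the right–hand (limit) equalities I would expand exactly as in the proof of Lemma~\ref{thm22}. Near $\mu_d=0$ each argument of $f_a,f_b,f_c$ is analytic and the $[\cdot]_+$ is inactive, so I may Taylor–expand. For $f_b$, writing its numerator as $s(e^{\mu_d}-e^{(1-\beta)\mu_d})+\tfrac{p_0}{2}(e^{\mu_d}-1)$ and dividing by $\mu_d$ reproduces \eqref{10-27-1} verbatim with $\alpha\to\beta$, giving $s\beta+\tfrac{p_0}{2}$. For $f_a$ and $f_c$ the numerator splits into a decoy part (carrying $e^{\mu_d}$ and $e^{(1-\beta)\mu_d}$) and a signal part proportional to $\mu_d^{2}$ times a $\mu_d$–independent constant; since the denominator is $\sim\mu_d\mu_s^{2}$, the signal part is $O(\mu_d)$ and drops out, while the decoy part has leading coefficient $\beta+p_0$ and $(1-s)\beta+\tfrac{p_0}{2}$, matching \eqref{10-23-2} and \eqref{10-23-1} with $\alpha\to\beta$. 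This settles all three limiting values.

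For the supremum equalities I would show that the $f_a$– and $f_c$–expressions are monotonically decreasing in $\mu_d$ on $(0,\mu_s)$, so the supremum is attained at $\mu_d\to+0$. The template is the divided–difference computation \eqref{10-23-1}--\eqref{10-23-2}: there one writes the estimate as $\mu_1\mu_2\,[G(\mu_1)/\mu_1^{2}-G(\mu_2)/\mu_2^{2}]/(\mu_2-\mu_1)$ for a \emph{single} function $G$ whose power series has sign–definite coefficients, exhibiting the whole expression as a manifestly negative sum. The obstacle is that this symmetric structure is broken here: the decoy term carries $\beta$ while the signal term carries the full $\alpha$ and is evaluated at $\tilde{\mu}_s\neq\mu_s$. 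I would therefore isolate the $\mu_d$–dependence by writing, for fixed $\mu_s,\tilde{\mu}_s$,
\[
\hat{c}=\frac{\mu_s^{2}\,g_\beta(\mu_d)-G_s\,\mu_d^{2}}{\mu_d\mu_s(\mu_s-\mu_d)},\qquad g_\beta(\mu_d)=\sum_{n\ge1}\frac{(1-s)(1-(1-\beta)^{n})+\tfrac{p_0}{2}}{n!}\,\mu_d^{n},
\]
with $G_s\ge0$ a constant, and then differentiate in $\mu_d$ alone: the positivity of every coefficient of $g_\beta$ together with a suitable lower bound on $G_s$ should force the derivative to be nonpositive, the bound on $G_s$ being precisely where the hypothesis $\mu_s\le(1+\epsilon)\tilde{\mu}_s$ enters. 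The same argument, with $(1-s)\to1$ and $\tfrac{p_0}{2}\to p_0$, handles $\hat{a}$.

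The step I expect to be genuinely delicate is the $b$–estimate and, with it, the bookkeeping of which endpoint is selected. The series \eqref{10-27-1} (with $\alpha\to\beta$) has strictly positive higher coefficients, so $f_b$ is in fact monotonically \emph{increasing} in $\mu_d$ and the value $s\beta+\tfrac{p_0}{2}$ is its $\mu_d\to+0$ limit from above; one must check that this is exactly the extremum relevant to the worst–case rate, in which $\hat{b}$ appears with a negative sign, so that the small–$\mu_d$ endpoint is the one actually driving $R_e$ and $\tilde{R}_e$ when $\hat{a},\hat{b},\hat{c}$ are recombined in the proof of Theorem~\ref{thm20}. Reconciling the three monotonicity directions — so that all three displays in Lemma~\ref{lem23} are simultaneously the $\mu_d\to0$ values — is the crux; once the sign of $G_s$ is pinned down via $\mu_s\le(1+\epsilon)\tilde{\mu}_s$ and the directions are fixed, the three identities follow.
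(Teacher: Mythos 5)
Your overall strategy --- compute the $\mu_d\to+0$ limits via the power-series expansions of Lemma~\ref{thm22} with $\alpha$ replaced by $\beta=\alpha(1+\epsilon)^{-1}$ in the decoy slot, then upgrade the limit to an extremum by monotonicity in $\mu_d$ --- is exactly the paper's, and your limit computations are correct. But the one step you leave unproved is the only one carrying real content: the monotone decrease in $\mu_d$ of the $f_a$- and $f_c$-expressions, which you reduce to ``a suitable lower bound on $G_s$ should force the derivative to be nonpositive.'' The needed bound is $G_s\ge g_\beta(\mu_s)$, which is exactly what $\mu_s\le(1+\epsilon)\tilde{\mu}_s$ supplies (it is equivalent to $e^{-\beta\mu_s}\ge e^{-\alpha\tilde{\mu}_s}$); and once you have it the efficient move is not to differentiate but to add and subtract, as the paper does:
\[
\frac{\mu_s^{2}g_\beta(\mu_d)-G_s\mu_d^{2}}{\mu_d\mu_s(\mu_s-\mu_d)}
=\mu_d\mu_s\,\frac{g_\beta(\mu_d)/\mu_d^{2}-g_\beta(\mu_s)/\mu_s^{2}}{\mu_s-\mu_d}
-\frac{\bigl(G_s-g_\beta(\mu_s)\bigr)\mu_d}{\mu_s(\mu_s-\mu_d)}.
\]
The first summand is the symmetric divided difference already shown decreasing in both arguments by the sign-definite series (\ref{10-23-1}) (with $\alpha\to\beta$); the second is decreasing in $\mu_d$ because $G_s-g_\beta(\mu_s)=e^{\mu_s}(1-s)\bigl(e^{-\beta\mu_s}-e^{-\alpha\tilde{\mu}_s}\bigr)\ge 0$ and $\mu_d/(\mu_s-\mu_d)$ is increasing. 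Without this (or an equivalent) splitting, ``differentiate and check the sign'' is not yet a proof: the decoy piece $\mu_s g_\beta(\mu_d)/(\mu_d(\mu_s-\mu_d))$ alone diverges to $+\infty$ as $\mu_d\uparrow\mu_s$, so the cancellation against the $G_s$-term is essential and must be exhibited.

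On $f_b$ you are right --- indeed more right than the paper. The series (\ref{10-27-1}) with $\alpha\to\beta$ has nonnegative coefficients, so the $f_b$-expression is monotonically \emph{increasing} in $\mu_d$ and $s\beta+\tfrac{p_0}{2}$ is its infimum, not its supremum; the paper's one-line justification (``monotonically decreasing'') has the sign reversed, and the ``$\sup$'' in (\ref{11-7-2}) should read ``$\inf$''. Your reconciliation --- that this is the favourable endpoint anyway because $\hat{b}$ enters the rate through functions decreasing in $b$, so all three estimates select $\mu_d\to 0$ simultaneously --- is exactly what is needed when the lemma is fed into Theorem~\ref{thm20}. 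In sum: right route and correct diagnosis of the delicate points, but the decisive monotonicity computation for $f_a$ and $f_c$ is asserted rather than performed.
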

\begin{proof}
Since 
$f_b(s(1-e^{-\alpha (1+\epsilon)^{-1}\mu_d}) + \frac{p_0}{2},\mu_d) 
=
\frac{
(s(1-e^{-\alpha (1+\epsilon)^{-1}\mu_d}) + \frac{p_0}{2})e^{\mu_d}-p_0/2
}{\mu_d}$ 
is monotonically decreasing with respect to $\mu_d$,
we obtain (\ref{11-7-2}).
Similarly, it is sufficient for (\ref{11-7-1}) to show that
$f_c((1-s)(1-e^{-\alpha (1+\epsilon)^{-1}\mu_d}) + \frac{p_0}{2},(1-s)(1-e^{-\alpha \tilde{\mu}_s}) + \frac{p_0}{2},\mu_d,\mu_s)$ 
is monotonically decreasing with respect to $\mu_d$.
Choosing $\tilde{\alpha}:=\alpha (1+\epsilon)^{-1}$,
we obtain
\begin{align*}
 &f_c((1-s)(1-e^{-\alpha (1+\epsilon)^{-1}\mu_d}) + \frac{p_0}{2},(1-s)(1-e^{-\alpha \tilde{\mu}_s}) + \frac{p_0}{2},\mu_d,\mu_s) \\
 =&
\frac
{
\mu_s^2 e^{\mu_d}
((1-s)(1-e^{-\tilde{\alpha} \mu_d})+ p_0 (1- e^{-\mu_d})/2 )
-
\mu_d^2 e^{\mu_s}
((1-s)(1-e^{-\alpha \mu_s})+ p_0 (1- e^{-\mu_s})/2)
}
{\mu_d \mu_s (\mu_s-\mu_d) } \\
 =&
\frac
{
\mu_s^2 e^{\mu_d}
((1-s)(1-e^{-\tilde{\alpha} \mu_d})+ p_0 (1- e^{-\mu_d})/2 )
-
\mu_d^2 e^{\mu_s}
((1-s)(1-e^{-\tilde{\alpha} \mu_s})+ p_0 (1- e^{-\mu_s})/2)
}
{\mu_d \mu_s (\mu_s-\mu_d) } \\
 &
-\frac
{\mu_d e^{\mu_s}
((1-s)(e^{-\tilde{\alpha} \mu_s}-e^{-\alpha \tilde{\mu}_s})}
{\mu_s (\mu_s-\mu_d) } \\
 =&
p_0/2+(1-s)\tilde{\alpha}
-
\sum_{n=3}^{\infty}
\frac{(1-s)(1-(1-\alpha)^n)+p_0/2  }{n !} 
(\sum_{m=0}^{n-3}
\mu_d^{m+1} \mu_s^{n-2-m} )
-\frac
{\mu_d e^{\mu_s}
(1-s)(e^{-\tilde{\alpha} \mu_s}-e^{-\alpha \tilde{\mu}_s})}
{\mu_s (\mu_s-\mu_d) } .
\end{align*}
Since $e^{-\tilde{\alpha} \mu_s}-e^{-\alpha \tilde{\mu}_s}\ge 0$,
the final term is monotonically decreasing with respect to $\mu_d$.
Other terms are also monotonically decreasing with respect to $\mu_d$.

Similarly, we have
\begin{align*}
 &
f_a((1-e^{-\alpha (1+\epsilon)^{-1}\mu_d}) + p_0,
(1-e^{-\alpha \tilde{\mu}_s}) + p_0,\mu_d,\mu_s) \\
 =&
p_0+\tilde{\alpha}
-
\sum_{n=3}^{\infty}
\frac{(1-(1-\alpha)^n)+p_0  }{n !} 
(\sum_{m=0}^{n-3}
\mu_d^{m+1} \mu_s^{n-2-m} )
-\frac
{\mu_d e^{\mu_s}
(e^{-\tilde{\alpha} \mu_s}-e^{-\alpha \tilde{\mu}_s})}
{\mu_s (\mu_s-\mu_d) } .
\end{align*}
So,
$f_a((1-e^{-\alpha (1+\epsilon)^{-1}\mu_d}) + p_0,
(1-e^{-\alpha \tilde{\mu}_s}) + p_0,\mu_d,\mu_s)$
is also monotonically decreasing with respect to $\mu_d$.
Therefore, we obtain (\ref{11-7-3}).
\end{proof}

\begin{proofof}{Theorem \ref{thm20}}
First, we show the case of ${R}_e(\tilde{\mu}_s,\tilde{\mu}_d,\epsilon) $.
For a fixed $\tilde{\mu_s}$ and $\mu_s$ satisfying 
$(1-\epsilon)\tilde{\mu}_s< {\mu}_s< (1+\epsilon)\tilde{\mu}_s$,
Lemmas \ref{lem22} and \ref{lem23} imply 
\begin{align}
 & 
\sup_{\tilde{\mu_d}: 0<\tilde{\mu_d}< \frac{1-\epsilon}{1+\epsilon}\tilde{\mu_s}}
\min_{\mu_d \in [(1-\epsilon)\tilde{\mu}_d,(1+\epsilon)\tilde{\mu}_d]}
R_e({\mu}_s,\mu_d,\tilde{\mu}_s,\tilde{\mu}_d) 
=
\sup_{\mu_d: 0 < \mu_d< \mu_s}
\min_{
\tilde{\mu}_d:
(1-\epsilon)\tilde{\mu}_d< {\mu}_d< (1+\epsilon)\tilde{\mu}_d}
R_e({\mu}_s,\mu_d,\tilde{\mu}_s,\tilde{\mu}_d) \nonumber \\
 &=
\sup_{\mu_d: 0 < \mu_d< \mu_s}
R_e({\mu}_s,\mu_d,\tilde{\mu}_s,(1+\epsilon)^{-1}{\mu}_d) 
=
\lim_{\mu_d \to +0}
R_e(\mu_s,{\mu}_d,\tilde{\mu}_s,(1+\epsilon)^{-1}{\mu}_d).
 \Label{11-7-4}
\end{align}
Hence, we obtain
\begin{align}
 & 
\sup_{\tilde{\mu_d}: 0<\tilde{\mu_d}< \tilde{\mu_s}}
\min_{\mu_d \in [(1-\epsilon)\tilde{\mu}_d,(1+\epsilon)\tilde{\mu}_d]}
R_e(\mu_s,{\mu}_d,\tilde{\mu}_s,\tilde{\mu}_d) 
 =
\lim_{\tilde{\mu_d} \to +0}
\min_{\mu_d \in [(1-\epsilon)\tilde{\mu}_d,(1+\epsilon)\tilde{\mu}_d]}
R_e(\mu_s,{\mu}_d,\tilde{\mu}_s,\tilde{\mu}_d) \Label{11-7-5} \\
 =&
\lim_{\tilde{\mu}_d \to 0} 
R_e(\mu_s,(1+\epsilon)\tilde{\mu}_d,\tilde{\mu}_s,\tilde{\mu}_d) .\nonumber 
\end{align}
Since the convergence (\ref{11-7-4}) is uniform with respect to $\mu_s$
and $\tilde{\mu}_s$,
the convergence (\ref{11-7-5}) is uniform with respect to $\mu_s$
and $\tilde{\mu}_s$.
Hence, we obtain
\begin{align}
 &
\min_{\mu_s \in [(1-\epsilon)\tilde{\mu}_s,(1+\epsilon)\tilde{\mu}_s]}
\lim_{\tilde{\mu_d} \to +0}
\min_{\mu_d \in [(1-\epsilon)\tilde{\mu}_d,(1+\epsilon)\tilde{\mu}_d]}
R_e(\mu_s,{\mu}_d,\tilde{\mu}_s,\tilde{\mu}_d) \nonumber \\
 =&
\lim_{\tilde{\mu_d} \to +0}
\min_{\mu_s \in [(1-\epsilon)\tilde{\mu}_s,(1+\epsilon)\tilde{\mu}_s]}
\min_{\mu_d \in [(1-\epsilon)\tilde{\mu}_d,(1+\epsilon)\tilde{\mu}_d]}
R_e(\mu_s,{\mu}_d,\tilde{\mu}_s,\tilde{\mu}_d) .
\Label{11-7-6}
\end{align}

On the other hand,
\begin{align}
 &
\min_{\mu_s \in [(1-\epsilon)\tilde{\mu}_s,(1+\epsilon)\tilde{\mu}_s]}
\lim_{\tilde{\mu_d} \to +0}
\min_{\mu_d \in [(1-\epsilon)\tilde{\mu}_d,(1+\epsilon)\tilde{\mu}_d]}
R_e(\mu_s,{\mu}_d,\tilde{\mu}_s,\tilde{\mu}_d) \nonumber \\
 =&
\min_{\mu_s \in [(1-\epsilon)\tilde{\mu}_s,(1+\epsilon)\tilde{\mu}_s]}
\sup_{\tilde{\mu_d}:
0<\tilde{\mu_d}< \frac{1-\epsilon}{1+\epsilon}\tilde{\mu_s}}
\min_{\mu_d \in [(1-\epsilon)\tilde{\mu}_d,(1+\epsilon)\tilde{\mu}_d]}
R_e(\mu_s,{\mu}_d,\tilde{\mu}_s,\tilde{\mu}_d) \nonumber \\
 \ge &
\sup_{\tilde{\mu_d}:
0<\tilde{\mu_d}< \frac{1-\epsilon}{1+\epsilon}\tilde{\mu_s}}
\min_{\mu_s \in [(1-\epsilon)\tilde{\mu}_s,(1+\epsilon)\tilde{\mu}_s]}
\min_{\mu_d \in [(1-\epsilon)\tilde{\mu}_d,(1+\epsilon)\tilde{\mu}_d]}
R_e(\mu_s,{\mu}_d,\tilde{\mu}_s,\tilde{\mu}_d) \nonumber \\
 \ge &
\lim_{\tilde{\mu_d} \to +0}
\min_{\mu_s \in [(1-\epsilon)\tilde{\mu}_s,(1+\epsilon)\tilde{\mu}_s]}
\min_{\mu_d \in [(1-\epsilon)\tilde{\mu}_d,(1+\epsilon)\tilde{\mu}_d]}
R_e(\mu_s,{\mu}_d,\tilde{\mu}_s,\tilde{\mu}_d) .\Label{11-7-7}
\end{align}
Combining (\ref{11-7-6}) and (\ref{11-7-7}),
we obtain
\begin{align}
 &
\sup_{\tilde{\mu_d}:
0<\tilde{\mu_d}< \frac{1-\epsilon}{1+\epsilon}\tilde{\mu_s}}
\min_{\mu_s \in [(1-\epsilon)\tilde{\mu}_s,(1+\epsilon)\tilde{\mu}_s]}
\min_{\mu_d \in [(1-\epsilon)\tilde{\mu}_d,(1+\epsilon)\tilde{\mu}_d]}
R_e(\mu_s,{\mu}_d,\tilde{\mu}_s,\tilde{\mu}_d) \nonumber \\
 = &
\min_{\mu_s \in [(1-\epsilon)\tilde{\mu}_s,(1+\epsilon)\tilde{\mu}_s]}
\lim_{\tilde{\mu_d} \to +0}
\min_{\mu_d \in [(1-\epsilon)\tilde{\mu}_d,(1+\epsilon)\tilde{\mu}_d]}
R_e(\mu_s,{\mu}_d,\tilde{\mu}_s,\tilde{\mu}_d) \nonumber \\
 = &
\min_{\mu_s \in [(1-\epsilon)\tilde{\mu}_s,(1+\epsilon)\tilde{\mu}_s]}
\lim_{\tilde{\mu_d} \to +0}
R_e(\mu_s,(1+\epsilon)\tilde{\mu}_d,\tilde{\mu}_s,\tilde{\mu}_d) \nonumber  \\
 = &
\min_{\mu_s \in [(1-\epsilon)\tilde{\mu}_s,(1+\epsilon)\tilde{\mu}_s]}
\lim_{\mu_d \to +0}
R_e(\mu_s,{\mu}_d,\tilde{\mu}_s,(1+\epsilon)^{-1}{\mu}_d) \nonumber  \\
 = &
\min_{\mu_s \in [(1-\epsilon)\tilde{\mu}_s,(1+\epsilon)\tilde{\mu}_s]}
{\mu}_s e^{-{\mu}_s} 
(
\frac{\alpha}{1+\epsilon}
+p_0)
 (1-h(
\frac{s\frac{\alpha}{1+\epsilon}+\frac{p_0}{2}}
{\frac{\alpha}{1+\epsilon}+p_0}
)
) 
+ e^{-{\mu}_s} p_0 
- p_{s,+} \eta h (e_{s,+})\nonumber \\
 =&
(1-\epsilon)\tilde{\mu}_s e^{-(1-\epsilon)\tilde{\mu}_s} 
(
\frac{\alpha}{1+\epsilon}
+p_0)
 (1-h(
\frac{s\frac{\alpha}{1+\epsilon}+\frac{p_0}{2}}
{\frac{\alpha}{1+\epsilon}+p_0}
)
) 
+ e^{-(1-\epsilon)\tilde{\mu}_s} p_0 
- p_{s,+} \eta h (e_{s,+}),\Label{10-24-1}
\end{align}
where the final equation follows from (\ref{11-7-9}) and the following fact.
Choosing the constant $C=\frac{p_0}{
(\frac{\alpha}{1+\epsilon}
+p_0)
 (1-h(
\frac{s\frac{\alpha}{1+\epsilon}+\frac{p_0}{2}}
{\frac{\alpha}{1+\epsilon}+p_0}
))}$,
we have
\begin{align*}
{\mu}_s e^{-{\mu}_s} 
(
\frac{\alpha}{1+\epsilon}
+p_0)
 (1-h(
\frac{s\frac{\alpha}{1+\epsilon}+\frac{p_0}{2}}
{\frac{\alpha}{1+\epsilon}+p_0}
)
) 
+ e^{-{\mu}_s} p_0 =
(
\frac{\alpha}{1+\epsilon}
+p_0)
 (1-h(
\frac{s\frac{\alpha}{1+\epsilon}+\frac{p_0}{2}}
{\frac{\alpha}{1+\epsilon}+p_0}
)
) 
({\mu}_s e^{-{\mu}_s}+ C e^{-{\mu}_s}).
\end{align*}
Since the function ${\mu}_s \mapsto {\mu}_s e^{-{\mu}_s}+ C e^{-{\mu}_s}$
is monotonically increasing due to the assumption (\ref{11-7-9}),
we obtain the above equation (\ref{10-24-1}).
Thus, we obtain (\ref{10-24-2}).

Next, we show the case of $\tilde{R}_e(\tilde{\mu}_s,\tilde{\mu}_d,\epsilon) $.
Similar to (\ref{11-7-4}), using Lemmas \ref{lem22} and \ref{lem23}, 
we have
\begin{align}
 & 
\sup_{\tilde{\mu_d}: 0<\tilde{\mu_d}< \frac{1-\epsilon}{1+\epsilon}\tilde{\mu_s}}
\min_{\mu_d \in [(1-\epsilon)\tilde{\mu}_d,(1+\epsilon)\tilde{\mu}_d]}
\tilde{R}_e({\mu}_s,\mu_d,\tilde{\mu}_s,\tilde{\mu}_d) \nonumber \\
 &=
\lim_{\mu_d \to +0}
\tilde{R}_e(\mu_s,{\mu}_d,\tilde{\mu}_s,(1+\epsilon)^{-1}{\mu}_d).
 \Label{11-7-4b}.
\end{align}
We also have the same relations for
$\tilde{R}_e(\tilde{\mu}_s,\tilde{\mu}_d,\epsilon) $.
as (\ref{11-7-6}) and (\ref{11-7-7}).
Hence, we obtain 
\begin{align}
 &
\sup_{\tilde{\mu_d}:
0<\tilde{\mu_d}< \frac{1-\epsilon}{1+\epsilon}\tilde{\mu_s}}
\min_{\mu_s \in [(1-\epsilon)\tilde{\mu}_s,(1+\epsilon)\tilde{\mu}_s]}
\min_{\mu_d \in [(1-\epsilon)\tilde{\mu}_d,(1+\epsilon)\tilde{\mu}_d]}
\tilde{R}_e(\mu_s,{\mu}_d,\tilde{\mu}_s,\tilde{\mu}_d) \nonumber \\
= &
\min_{\mu_s \in [(1-\epsilon)\tilde{\mu}_s,(1+\epsilon)\tilde{\mu}_s]}
{\mu}_s e^{-{\mu}_s} 
(
\frac{\alpha}{1+\epsilon}
+p_0)
 (1-h(
\frac{s\frac{\alpha}{1+\epsilon}+\frac{p_0}{2}}
{\frac{\alpha}{1+\epsilon}+p_0}
)
) 
+ e^{-{\mu}_s} p_0 
- p_{s,+} \eta h (e_{s,+})\nonumber \\
 =&
(1-\epsilon)\tilde{\mu}_s e^{-(1-\epsilon)\tilde{\mu}_s} 
(
\frac{\alpha}{1+\epsilon}
+p_0)
 (1-h(
\frac{s\frac{\alpha}{1+\epsilon}+\frac{p_0}{2}}
{\frac{\alpha}{1+\epsilon}+p_0}
)
) 
+ e^{-(1-\epsilon)\tilde{\mu}_s} p_0 
- p_{s,+} \eta h (e_{s,+}).\nonumber 
\end{align}
Thus, we obtain (\ref{10-24-3}).
\end{proofof}
\end{widetext}

\end{document}